\theoremstyle{plain}
\newtheorem{thm}{Theorem}
\newtheorem{defin}{Definition}[section]
\newtheorem{prop}[defin]{Proposition}
\newtheorem{cl}[defin]{Claim}
\newtheorem{rmk}[defin]{Remark}
\newtheorem{cor}[defin]{Corollary}
\newtheorem*{thm*}{Theorem}
\def\GL{\operatorname{GL}}
\def\Sp{\operatorname{Sp}}
\def\dist{\operatorname{dist}}
\def\Res{\operatorname{Res}}
\def\diam{\operatorname{diam}}
\def\leq{\leqslant}
\def\geq{\geqslant}
\begin{document}

\title{Anderson localisation for quasi-one-dimensional random operators }
\author{Davide Macera\textsuperscript{1} and Sasha Sodin\textsuperscript{2}}
\maketitle

\begin{abstract} In 1990, Klein, Lacroix, and Speis proved (spectral) Anderson localisation for the Anderson model on the strip of width $W \geqslant 1$, allowing for singular distribution of the potential. Their proof employs multi-scale analysis, in addition to arguments from the theory of random matrix products (the case of regular distributions was handled earlier in the works of Goldsheid and Lacroix by other means). We give a proof of their result avoiding multi-scale analysis, and also extend it to the general quasi-one-dimensional  model, allowing,  in particular,  random hopping. Furthermore, we prove a sharp bound on the eigenfunction correlator of the model, which implies exponential dynamical localisation and exponential decay of the Fermi projection. 

The method is also applicable to operatos on the half-line with arbitrary (deterministic, self-adjoint) boundary condition.

Our work generalises and complements the single-scale proofs of localisation in pure one dimension ($W=1$), recently  found by Bucaj--Damanik--Fillman--Gerbuz--VandenBoom--Wang--Zhang,  Jito\-mir\-skaya--Zhu, Go\-ro\-detski--Kleptsyn, and Rangamani. 

\end{abstract}
\footnotetext[1]{Department of Mathematics and Physics, Roma Tre University, Largo San Murialdo 1, 00146 Roma, Italy.  Email: davide.macera@uniroma3.it}
\footnotetext[2]{School of Mathematical Sciences, Queen Mary University of London, London E1 4NS, United Kingdom. Email: a.sodin@qmul.ac.uk. Supported in part by the European Research Council starting grant 639305 (SPECTRUM), a Royal Society Wolfson Research Merit Award (WM170012), and a Philip Leverhulme Prize of the Leverhulme Trust (PLP-2020-064).}

\section{Introduction}

\subsection{The operator, transfer matrices and Lyapunov exponents}
Let $W \geqslant 1$. Let $\{L_x\}_{x \in \mathbb Z}$ be a sequence of   identically distributed $W \times W$ random matrices in $\GL(W, \mathbb R)$, and let $\{ V_x \}_{x \in \mathbb Z}$ be a sequence of  identically distributed $W \times W$ real symmetric matrices, so that $\{L_x\}_{x \in \mathbb Z}, \{V_x\}_{x \in \mathbb Z}$ are jointly independent. Denote by $\mathcal L$ the support of $L_0$ and by $\mathcal V$ -- the support of $V_0$. Throughout this paper we assume that 
\begin{enumerate}
\item[(A)] there exists $\eta > 0$ such that 
\[ \mathbb E (\|V_0 \|^\eta + \| L_0\|^\eta +  \|L_0^{-1}\|^\eta) < \infty~; \]
\item[(B)] the Zariski closure of the group generated by $\mathcal L \, \mathcal L^{-1}$ in $\GL(W, \mathbb R)$ intersects $\mathcal L$ (this holds for example when $\mathbbm 1 \in \mathcal L$);
\item[(C)] $\mathcal V$ is irreducible (i.e.\ has no   common invariant subspaces except for $\{0\}$ and $\mathbb R^W$), and $\mathcal V - \mathcal V$ contains a matrix of rank one. 
\end{enumerate}
We are concerned with the spectral properties of the random operator $H$ acting on (a dense subspace of) $\ell_2 (\mathbb Z \to \mathbb C^W)$ via \begin{equation}\label{eq:defop}
(H \psi)(x) = L_x \psi(x+1) + V_x \psi(x) + L_{x-1}^\intercal \psi(x-1)~, \quad x \in \mathbb Z~.
\end{equation}
This model, often referred to as a quasi-one-dimensional random operator, is the   general Hamiltonian describing a quantum particle with $W$ internal degrees of freedom in random potential and   with nearest-neighbour random hopping. The special case $L_x \equiv \mathbbm 1$ is known as the block Anderson model; it is in turn a generalisation of  the Anderson model on   the strip $\mathbb Z \times \{1, \cdots, W\}$, and, more generally, on $\mathbb Z \times \Gamma$, where $\Gamma$ is any connected finite graph (the assumption that $\Gamma$ is connected ensures that $\mathcal V$ is irreducible). Another known special case of (\ref{eq:defop}) is the Wegner orbital model.  

Fix $E \in \mathbb R$. If $\psi: \mathbb Z \to \mathbb C^W$ is a formal solution of the equation 
\[ L_x \psi(x+1) + V_x \psi(x) + L_{x-1}^\intercal \psi(x-1) = E \psi(x)~, \quad x \geqslant 1~, \] 
then 
\begin{equation}\label{eq:Tx}
\binom{\psi(x+1)}{\psi(x)} = T_x \binom{\psi(x)}{\psi(x-1)}~, \end{equation}
where the one-step  transfer matrix $T_x \in \GL(2W, \mathbb R)$ is given by
\begin{equation}\label{eq:Tx'}
T_x = \left( \begin{array}{cc} L_x^{-1}(E \mathbbm{1} - V_x) & - L_x^{-1} L_{x-1}^\intercal \\ \mathbbm{1} & 0 \end{array}\right)~.
\end{equation}
The multi-step transfer matrices $\Phi_{x, y} \in \GL(2W, \mathbb R)$, $x, y \in \mathbb Z$, are defined by
\begin{equation}\label{eq:Phix}
\Phi_{x,y} = \begin{cases}
T_{x-1} \cdots T_y~, & x > y \\
\mathbbm{1}~, & x = y \\
T_{x}^{-1} \cdots T_{y-1}^{-1}~, & x < y~,
\end{cases}
\end{equation}
so that 
\begin{equation}\label{eq:Phix'}
\Phi_{x,y} \binom{\psi(y)}{\psi(y-1)} = \binom{\psi(x)}{\psi(x-1)}~.
\end{equation}
In particular, $T_x = \Phi_{x+1,x}$. We abbreviate $\Phi_{N} = \Phi_{N,0}$. The Lyapunov exponents $\gamma_j(E)$, $1 \leqslant j \leqslant 2W$, are defined as
\[ \gamma_j(E) = \lim_{N \to \infty} \frac1N \mathbb E \log s_j(\Phi_N(E))~, \]
where $s_j$ stands for the $j$-th singular value. It is known \cite{FK} that (for fixed $E$) this limit in expectation is also an almost sure limit. The cocycle $\{\Phi_{x,y}\}$ is conjugate to a symplectic one (see Section~\ref{s:sympl}), and hence 
\[ \gamma_j (E) = - \gamma_{2W+1-j} (E)~, \quad j =1 ,\cdots, W~.\]
Further, as we shall see in Section~\ref{s:simpl}, using the work of Goldsheid \cite{G95} to verify the conditions of the Goldsheid--Margulis theorem \cite{GM}  on the simplicity of the Lyapunov spectrum, that
\[ \gamma_1(E)  > \gamma_2(E) > \cdots > \gamma_W(E) > 0~.\]
We also mention that the Lyapunov exponents $\gamma_j(E)$ are continuous functions of $E$. This was proved by Furstenberg and Kifer in \cite{FKif};  it can also be deduced from the large deviation estimate (\ref{eq:LDP-norm}) -- see Duarte and Klein \cite{DK}.

\subsection{The main results}

\begin{thm}\label{thm:1}
Assume (A)--(C). Then the spectrum of $H$ is almost surely pure point. Moreover, if 
\[ \mathcal E[H] = \left\{ (E, \psi) \in \mathbb R \times \ell_2(\mathbb Z \to \mathbb C^W) \, : \, \| \psi\| = 1~, \, H
\psi = E\psi \right\} \]
is the collection of eigenpairs of $H$, then  
\begin{equation}\label{eq:thm1} \mathbb P \left\{ \forall (E, \psi) \in \mathcal E[H] \,\,\, \limsup_{x \to \pm\infty} \frac{1}{|x|} \log \|\psi(x)\| \leqslant - \gamma_W(E)\right\}   =1~, \end{equation}
i.e.\ each eigenfunction decays exponentially, with the rate lower-bounded by the slowest Lyapunov exponent.
\end{thm}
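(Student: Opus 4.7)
\textbf{Proof plan for Theorem~\ref{thm:1}.}

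The strategy is to reduce to polynomially bounded generalised eigenfunctions via a Schnol-type argument, and then use the full Oseledets decomposition of the symplectic cocycle $\{\Phi_N(E)\}$ to show that every such eigenfunction must decay at rate at least $\gamma_W(E)$.

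\emph{Step 1 (reduction).} A Schnol--Simon argument shows that the spectral measure of $H$ is almost surely supported on those energies $E$ admitting a polynomially bounded formal solution $\psi:\mathbb Z \to \mathbb C^W$ of $H\psi = E\psi$; conversely, any $\ell^2$ eigenfunction is a fortiori polynomially bounded. It therefore suffices to prove that, almost surely, every polynomially bounded $\psi$ satisfies $\limsup_{|x|\to\infty} \frac{1}{|x|} \log \|\psi(x)\| \leq -\gamma_W(E)$: this yields (\ref{eq:thm1}) for every eigenpair, and at the same time forces every candidate generalised eigenfunction to lie in $\ell^2$, giving pure point spectrum.

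\emph{Step 2 (deterministic geometry).} The initial condition $v = \binom{\psi(0)}{\psi(-1)} \in \mathbb R^{2W}$ generates the entire orbit via $\Phi_N(E) v = \binom{\psi(N)}{\psi(N-1)}$. By the simplicity of the Lyapunov spectrum established in Section~\ref{s:simpl} and Oseledets' theorem, almost surely there is a Lagrangian splitting $\mathbb R^{2W} = V^s_+(E) \oplus V^u_+(E)$ such that $\|\Phi_N(E) v\| \leq e^{-N(\gamma_W(E) - o(1))}$ for $v \in V^s_+(E)$ while $\|\Phi_N(E) v\| \geq e^{N(\gamma_W(E) - o(1))}$ for $v \notin V^s_+(E)$, and an analogous splitting $V^s_-(E) \oplus V^u_-(E)$ at $-\infty$ governs $\Phi_{-N,0}(E) v$, obtained from the forward dynamics by the symplectic time-reversal $\gamma_j(E) = -\gamma_{2W+1-j}(E)$. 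Polynomial boundedness of $\psi$ forces $v \in V^s_+(E) \cap V^s_-(E)$, and the decay estimates above then deliver the desired bound.

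\emph{Step 3 (uniformity in $E$).} The main obstacle, and the step at which our work departs from the traditional multi-scale scheme, is that the almost-sure event appearing in Step 2 depends on $E$, while the point spectrum is an a priori uncountable subset of the spectrum. To upgrade Step 2 to a single event of full probability on which the decay holds for \emph{all} eigenpairs simultaneously, I would establish large deviation estimates
\begin{equation}
\mathbb P\Bigl(\bigl|\tfrac{1}{N}\log s_j(\Phi_N(E)) - \gamma_j(E)\bigr| > \epsilon\Bigr) \leq C(\epsilon)\, e^{-c(\epsilon) N^\alpha}~, \qquad 1 \leq j \leq 2W,
\end{equation}
uniformly in $E$ on compacts, whose proof combines the Le Page / Goldsheid--Margulis framework with assumptions (A)--(C). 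These LDE's, together with a Wegner-type bound ruling out approximate eigenvalues of the finite-volume restrictions near $E$, a net in the energy variable, and Lipschitz continuity of $E\mapsto \Phi_N(E)$, are then fed into a Borel--Cantelli argument to produce a single full-measure event on which both the stable splittings and the rate bounds of Step 2 hold at every $E$. The core difficulty compared to the one-dimensional single-scale proofs is that one must simultaneously control all $W$ positive singular values and their associated Oseledets directions rather than only the top one; this is where the irreducibility and rank-one assumptions on $\mathcal V$, the hypothesis (B) on $\mathcal L$, and the symplectic conjugation of Section~\ref{s:sympl} do the real work.
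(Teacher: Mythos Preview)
Your reduction in Step~1 is the same as the paper's. The divergence begins at Step~2: you want to run the argument through the Oseledets splitting $V^s_\pm(E)$ of the cocycle, whereas the paper never invokes Oseledets at all. Instead it works with finite-volume Green functions and defines, for each scale $N$, a set $\Res(\tau,E,N)\subset\mathbb Z$ of resonant sites. The key technical statement (Proposition~\ref{prop:main}) is that with probability $\geqslant 1-Ce^{-cN}$ one has $\diam(\Res(\tau,E,N)\cap[-N^2,N^2])\leqslant 2N$ \emph{simultaneously for all} $E\in I$. The proof of this uniformity hinges on a deterministic fact (Claim~\ref{cl:un}): for each fixed site $x$, the set of resonant energies is a union of at most $C_WN$ closed intervals, because each Green-function matrix entry is a ratio of polynomials of degree $O(N)$ in $E$. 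This polynomial structure, plus independence of disjoint boxes, reduces the uniform-in-$E$ statement to $O(N)$ applications of the fixed-$E$ bound. Theorem~\ref{thm:1} then follows by iterating the resolvent identity away from the (spatially localised) resonant set.

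Your Step~3, by contrast, is where the gap lies. You propose to obtain the Oseledets subspaces $V^s_\pm(E)$ uniformly in $E$ from large deviations on the singular values $s_j(\Phi_N(E))$ combined with an energy net and Lipschitz continuity. But large deviations on singular values do not by themselves control the \emph{directions} of the Oseledets filtration; the most-contracted subspace of $\Phi_N(E)$ at finite $N$ can jump discontinuously in $E$, and $\|\partial_E\Phi_N(E)\|$ grows like $e^{\gamma_1 N}$, so the net would have to be exponentially fine in $N$ just to stabilise the singular values, let alone the subspaces. You also do not explain what role the Wegner estimate plays in your scheme: in the paper it enters through the Green-function representation (Claim~\ref{cl:formG}) to bound $\mathbb P\{E\in\Res^*\}$, but in a pure transfer-matrix approach there is no obvious place for it. In short, ``LDE $+$ Wegner $+$ net $+$ Borel--Cantelli'' is a list of ingredients rather than a mechanism; the paper's resonant-site machinery is precisely the missing mechanism, and it replaces your Step~2 altogether rather than supplementing it.
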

\begin{rmk} It is believed that the lower bound is sharp, i.e.\ the rate of decay can not be faster than the slowest Lyapunov exponent:
\begin{equation}\label{eq:lowerbd} \mathbb P \left\{ \forall (E, \psi) \in \mathcal E[H] \,\,\, \liminf_{x \to \pm\infty} \frac{1}{|x|} \log \|\psi(x)\| \geqslant - \gamma_W(E)\right\}   =1.\end{equation}
We refer to \cite{GS} for a discussion and partial results in this direction. For $W=1$, (\ref{eq:lowerbd}) was proved by Craig and Simon in \cite{CS}.
\end{rmk}

The property of having pure point spectrum with exponentially decaying eigenfunctions is a manifestation of Anderson localisation of the random operator $H$.  The mathematical work on Anderson localisation in one dimension was initiated by Goldsheid, Molchanov and Pastur \cite{GMP}, who considered the case $W=1$, $L_x \equiv 1$ and established the pure point nature of the spectrum under the assumption that the distribution of $V_x$ is regular enough (absolutely continuous with bounded density). A different proof of the result of \cite{GMP} was found by  Kunz and Souillard \cite{KS}. Under the same assumptions, the exponential decay of the eigenfunctions was established by Molchanov \cite{Molch}. The case of singular distributions  was treated by 
 Carmona, Klein, and Martinelli \cite{CKM}. 

The case $W >1$ was first considered by Goldsheid \cite{G80}, who established the pure point nature of the spectrum for the case of the Schr\"odinger operator on the strip, i.e.\ when $L_x \equiv \mathbbm 1$, $V_x$ is tridiagonal with the off-diagonal entries equal to $1$ and the diagonal ones independent and identically distributed, under the assumption that the distribution of the diagonal etries of $V_x$ is regular. In the same setting, Lacroix \cite{Lacr,Lacr2,Lacr3} proved that the eigenfunctions decay exponentially. The case of the Anderson model on a strip with general (possibly, singular) distributions was settled by Klein--Lacroix--Speis \cite{KLS}, who  established localisation in the strong form (\ref{eq:thm1}).

Unlike the earlier, more direct arguments treating regular distributions, the works \cite{CKM, KLS} allowing singular distributions involve a multi-scale argument (as developed in the work of Fr\"ohlich--Spencer \cite{FS} on localisation in higher dimension); the theory of random matrix products is used to verify the initial hypothesis of multi-scale analysis. Recently, proofs of  the result of \cite{CKM} avoiding multi-scale analysis were found by Bucaj et al.\ \cite{BDFGVWZ}, Jitomirskaya and Zhu \cite{JZ}, and Gorodetski and Kleptsyn \cite{GorKl}; the general one-dimensional case (allowing for random hopping) was settled by Rangamani \cite{R}.  Our Theorem~\ref{thm:1} can be seen as a generalisation of 
these works, and especially of \cite{JZ,R}, to which our arguments are closest in spirit: we give a relatively short and single-scale proof of localisation which applies to arbitrary $W \geqslant 1$, and allows for rather general distributions of $V_0$ and $L_0$ (under no regularity assumptions on the distribution of the potential). In particular, we recover and generalise the result of \cite{KLS}.

In fact, we prove a stronger result pertaining to the eigenfunction correlators, introduced by Aizenman \cite{A} (see further the monograph of Aizenman--Warzel \cite{AW}). If $\Lambda \subset \mathbb Z$ is a finite set, denote by $H_\Lambda$ the restriction of $H$ to $\ell_2(\Lambda \to \mathbb C^W)$, i.e. 
\[ H_\Lambda = P_\Lambda H P_\Lambda^*~, \]
where $P_\Lambda: \ell_2(\mathbb Z \to \mathbb C^W) \to \ell_2(\Lambda \to \mathbb C^W)$ is the coordinate projection. If $I \subset \mathbb R$ is a compact interval, denote
\[ Q_I^\Lambda(x, y) = \sup \left\{ \|f(H_{\Lambda})_{x,y}\| \, : \, \operatorname{supp} f \subset I~, \, |f| \leqslant 1 \right\}~, \quad Q_I(x, y) = \sup_{a \leqslant x, y \leqslant b}  Q_I^{[a,b]}(x, y)~. \]
Here $\|f(H_{\Lambda})_{x,y}\|$ is the operator norm of the $(x,y)$ block of $f(H_{\Lambda})$, and the functions $f$ in the supremum are assumed to be, say, Borel measurable.

\begin{thm}\label{thm:2} Assume (A)--(C). For any compact interval $I \subset \mathbb R$, 
\begin{equation}\label{eq:thm2}\mathbb P\left\{  \limsup_{x \to \pm \infty} \frac{1}{|x|} \log Q_I(x, y) \leqslant - \inf_{E \in I} \gamma_W(E) \right\} = 1~.\end{equation}
\end{thm}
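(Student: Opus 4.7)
The plan is to deduce Theorem~\ref{thm:2} from a uniform-in-$E$ exponential decay estimate for the finite-volume Green function of the form: for each $\epsilon > 0$ and each sufficiently large $N$, with probability at least $1 - e^{-cN^\alpha}$ (summable in $N$),
\[
\sup_{E \in I} \bigl\| (H_{[a,b]} - E)^{-1}_{x,y} \bigr\| \leq e^{-(\gamma - \epsilon)|x-y|}, \qquad \gamma = \inf_{E \in I} \gamma_W(E),
\]
uniformly over all boxes $[a,b]$ with $a \leq x, y \leq b$ and $b - a \asymp |x-y|$. Given such an estimate, the eigenfunction correlator is controlled in the standard manner: the spectral expansion of $f(H_{[a,b]})_{x,y}$ over the eigenbasis, combined with the observation that the eigenfunctions with eigenvalue in $I$ inherit exponential decay from the Green function bound (via a Combes--Thomas/Simon--Wolff-type argument), gives $Q_I^{[a,b]}(x,y) \leq C e^{-(\gamma - 2\epsilon)|x-y|}$. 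Theorem~\ref{thm:2} then follows from Borel--Cantelli applied to the sequence of scales $N = |x-y|$, with $\epsilon \to 0$ along a countable subsequence. In passing one also obtains Theorem~\ref{thm:1} via a standard RAGE-type argument.

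The finite-volume Green function bound rests on two probabilistic inputs. The first is a uniform large deviation estimate for the slowest positive singular value of the cocycle,
\[
\mathbb{P}\Bigl(\exists E \in I:\, \tfrac{1}{N}\log s_W(\Phi_N(E)) < \gamma_W(E) - \epsilon\Bigr) \leq e^{-cN},
\]
obtained by combining the simplicity of the Lyapunov spectrum from Section~\ref{s:simpl} with exponential mixing on the relevant Grassmannians, and upgrading pointwise bounds to uniform ones in $E$ via an a priori polynomial bound on $\|\Phi_N(E)\|$, the Lipschitz dependence of $\Phi_N$ on $E$, and a net argument. The second is a Wegner-type bound $\mathbb{P}(\operatorname{dist}(E, \sigma(H_\Lambda)) < \delta) \leq C\delta^\alpha |\Lambda|^\beta$, which in the present singular-distribution generality follows from the rank-one element of $\mathcal V - \mathcal V$ posited in~(C) by a Carmona--Klein--Martinelli-style spectral averaging argument adapted to the matrix setting. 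A block version of Cramer's formula then expresses $G_{[a,b]}(E;x,y)$ as a cofactor extracted from multi-step transfer matrices divided by $\det(H_{[a,b]} - E)$, allowing these two ingredients to be assembled into the desired exponential decay, at the rate $\gamma_W(E)$ governed by the slowest positive Lyapunov exponent.

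The main obstacle is expected to be the uniform-in-$E$ large deviation bound on the \emph{slowest} positive singular value $s_W(\Phi_N(E))$. In the scalar case $W = 1$ the identity $s_1 s_2 = 1$ reduces everything to controlling the top singular value, as exploited in the single-scale proofs of \cite{BDFGVWZ, JZ, GorKl, R}. For $W > 1$, however, $\gamma_W$ is governed by an intermediate Oseledets subspace whose contraction properties are not directly visible in the dynamics on $\mathbb{RP}^{2W-1}$; one must instead work with the induced cocycles on all exterior powers $\wedge^j \Phi_N(E)$ for $1 \leq j \leq W$, proving an exponentially concentrated estimate on each $\tfrac{1}{N}\log s_1(\wedge^j \Phi_N(E))$, and then extracting $s_W(\Phi_N) \asymp s_1(\wedge^W \Phi_N)/s_1(\wedge^{W-1} \Phi_N)$ (up to subexponential factors) via the symplectic identity $\gamma_{2W+1-j} = -\gamma_j$. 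Matching the exact rate $\gamma_W(E)$, rather than a strictly smaller number, is then taken care of by sending $\epsilon \to 0$ and invoking the continuity of $\gamma_W$ in $E$.
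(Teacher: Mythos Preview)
Your outline has a genuine gap at its central step: the proposed upgrade of the pointwise large deviation estimate to a bound uniform in $E \in I$ via ``an a priori polynomial bound on $\|\Phi_N(E)\|$, the Lipschitz dependence of $\Phi_N$ on $E$, and a net argument'' does not work. The transfer matrix $\Phi_N(E)$ is indeed a polynomial in $E$, but its coefficients have norms growing like $e^{\gamma_1(E) N}$, so the Lipschitz constant of $E \mapsto \Phi_N(E)$ (and hence of $E \mapsto \log s_W(\Phi_N(E))$) is exponential in $N$. A net fine enough to resolve this would need $e^{O(N)}$ points, and a union bound over that many points against a pointwise failure probability of $e^{-cN}$ gives nothing unless $c$ happens to dominate the growth rate --- which the pointwise LDP does not guarantee. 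The paper circumvents this entirely: rather than proving a uniform LDP, it exploits the deterministic fact (Claim~\ref{cl:un}) that for each fixed $x$ the set $\Res^*(\tau,x,N)\subset I$ of ``bad'' energies is a union of at most $C_W N$ intervals (since each Green function entry is a ratio of polynomials of degree $\leqslant W(2N+1)$ in $E$). Two far-apart sites $x,y$ then have independent resonant sets, and the probability that they intersect is bounded by $O(N)$ times the pointwise probability, which is exponentially small. This structural argument, encapsulated in Proposition~\ref{prop:main}, is the replacement for the uniform LDP you posit, and is the main new idea of the paper relative to the $W=1$ works.

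Two secondary issues. First, the Wegner estimate you state, $\mathbb{P}(\dist(E,\sigma(H_\Lambda))<\delta)\leqslant C\delta^\alpha|\Lambda|^\beta$, is generally false for singular distributions (e.g.\ Bernoulli: the spectrum of $H_\Lambda$ takes only finitely many values, so for $\delta$ small the probability is bounded away from zero whenever $E$ is one of these eigenvalues). What is available, and what the paper proves in Proposition~\ref{prop:Weg}, is only the coarser statement $\mathbb{P}(\dist(E,\sigma(H_{[-N,N]}))\leqslant e^{-\epsilon N})\leqslant Ce^{-cN}$, obtained from transfer-matrix arguments rather than spectral averaging. Second, your passage from Green-function decay to a bound on $Q_I^\Lambda(x,y)$ via eigenfunction expansion is vague and would require uniform control on all eigenfunctions; the paper instead uses the integral bound $Q_I^\Lambda(x,y)\leqslant\lim_{\epsilon\to+0}\frac{\epsilon}{2}\int_I\|G_E[H_\Lambda](x,y)\|^{1-\epsilon}\,dE$ (eq.~(\ref{eq:Qint})), which converts the question directly into one about Green functions and allows the iterated resolvent expansion of Section~\ref{s:pf2} to conclude.
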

It is known (see \cite{AW}) that Theorem~\ref{thm:2} implies Theorem~\ref{thm:1}. By plugging in various choices of $f$, it also implies (almost sure) dynamical localisation with the sharp rate of exponential decay, the exponential decay of the Fermi projection, et cet. (see e.g. \cite{AG} and \cite{AW}). We chose to state Theorem~\ref{thm:1} as a separate result rather than a corollary of Theorem~\ref{thm:2} since its direct proof is somewhat shorter than that of the latter. 

We refer to Bucaj et al.\ \cite{BDFGVWZ}, Jitomirskaya--Zhu \cite{JZ}, and Ge-Zhao \cite{GZ} for earlier results on dynamical localisation for $W = 1$.

\subsection{Main ingredients of the proof}\label{s:ingr}

Similarly to many of the previous works, including \cite{CKM, KLS} and also the recent works \cite{BDFGVWZ,JZ,GorKl}, the two main ingredients of the proof of localisation are a large deviation estimate and a Wegner-type estimate. We state these in the  generality required here. Let $I \subset \mathbb R$ be a compact interval, and let $F \subset \mathbb R^{2W}$ be 
a Lagrangian subspace (see Section~\ref{s:tm}). Denote by $\pi_F: \mathbb R^{2W} \to F$ the orthogonal projection onto $F$.
\begin{prop}\label{prop:LDP} Assume (A)--(C). For any $\epsilon > 0$ there exist $C, c> 0$ such that for any $E \in I$ and any Lagrangian subspace $F \subset \mathbb R^{2W}$
\begin{equation}\label{eq:LDP}
\mathbb P \left\{ \left| \frac1N \log s_W(\Phi_N(E) \pi_F^*) - \gamma_W(E) \right| \geqslant \epsilon \right\} \leqslant C e^{-cN}~.
\end{equation}
\end{prop}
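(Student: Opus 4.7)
The plan is to reduce the statement to classical large-deviation principles (LDPs) for the exterior power cocycles, exploiting the symplectic structure established in Section~\ref{s:sympl}.

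First, I record two elementary facts about a $2W \times W$ matrix $A$ with singular values $s_1(A) \geq \cdots \geq s_W(A) > 0$: one has $s_1(A) \cdots s_k(A) = \|\Lambda^k A\|$, and $s_j(AB) \leq s_j(A) \|B\|$ for any $B$. Applied to $A = \Phi_N(E) \pi_F^*$, using that $\pi_F^*$ is an isometric embedding (so $\|\Lambda^k \pi_F^*\| = 1$ for every $k$), these give the upper bound $s_W(\Phi_N \pi_F^*) \leq s_W(\Phi_N)$, and, writing $\xi_F \in \Lambda^W \mathbb R^{2W}$ for the unit decomposable Plücker vector representing $F$, the lower bound
\[ s_W(\Phi_N \pi_F^*) \;=\; \frac{\|\Lambda^W(\Phi_N \pi_F^*)\|}{\|\Lambda^{W-1}(\Phi_N \pi_F^*)\|} \;\geq\; \frac{\|\Lambda^W \Phi_N \, \xi_F\|}{\|\Lambda^{W-1} \Phi_N\|}. \]
Since also $s_W(\Phi_N) = \|\Lambda^W \Phi_N\| / \|\Lambda^{W-1} \Phi_N\|$, the proposition reduces to (i) exponential LDPs for $\|\Lambda^k \Phi_N(E)\|$ around $\sum_{j=1}^k \gamma_j(E)$, with $k = W-1, W$, and (ii) a lower LDP for $\|\Lambda^W \Phi_N(E) \, \xi_F\|$ around $\sum_{j=1}^W \gamma_j(E)$, uniform in the Lagrangian $F$.

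Second, by Section~\ref{s:simpl} (using Assumptions (B)--(C) together with Goldsheid's theorem), the cocycle acts Zariski-densely in the symplectic group, so the classical theory of random matrix products --- in particular the LDPs of Le Page and of Bougerol--Lacroix for strongly irreducible, proximal random walks --- applies to each of its exterior powers. This supplies the LDPs needed in (i) at any fixed $E$. The Plücker image of the Lagrangian Grassmannian is a compact $\Lambda^W \Phi_N$-invariant subvariety of the projective space of $\Lambda^W \mathbb R^{2W}$ on which the induced action is strongly irreducible and proximal (a consequence of Zariski density in $\Sp$), so the LDP in (ii) follows from the same theory, with uniformity in $\xi_F$ coming from compactness. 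Combining with the bounds of the previous paragraph yields \eqref{eq:LDP} at each fixed $E$; uniformity in $E \in I$ follows from a standard covering argument based on the already-stated continuity of the Lyapunov exponents and of $E \mapsto T_x(E)$.

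The main obstacle is the uniform-in-$F$ lower LDP for $\|\Lambda^W \Phi_N(E) \, \xi_F\|$: a priori, a Lagrangian $F$ close to the (random) most-contracting Lagrangian of the cocycle could give an anomalously small value of this quantity. This is handled by exploiting the strongly irreducible, proximal action of the symplectic cocycle on the Lagrangian Grassmannian: after a warm-up of order $\log N$ steps, with overwhelming probability the image $\Phi_{N_0}(E) F$ falls into a neighbourhood of ``general position'' where a pointwise LDP applies with constants independent of the initial $F$; equivalently, one may directly invoke the uniform LDP on compact homogeneous spaces for such random walks.
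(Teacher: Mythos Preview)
Your strategy --- reduce to large-deviation estimates for norms of exterior powers and then invoke the classical Le~Page/Bougerol--Lacroix theory via strong irreducibility and proximality coming from Zariski density --- is sound, and it is essentially what the paper does: the paper simply cites \cite[Proposition~2.7]{KLS} for the symplectic cocycle $\widetilde\Phi_N$ and then remarks that Proposition~\ref{prop:LDP} follows. Your write-up unpacks what goes into that citation.

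There is, however, one point you gloss over that the paper handles explicitly. You apply the i.i.d.\ random-matrix-product machinery directly to $\Phi_N=T_{N-1}\cdots T_0$, but the factors $T_x$ are \emph{not} independent: by \eqref{eq:Tx'} each $T_x$ depends on $L_{x-1}$, which also enters $T_{x-1}$. This is precisely why the paper first conjugates (Section~\ref{s:sympl}) to the symplectic cocycle $\widetilde T_x=Q(L_x,E\mathbbm1-V_x)$, whose factors depend only on $(L_x,V_x)$ and hence are genuinely i.i.d. Your reference to Section~\ref{s:simpl} suggests you have this in mind, but the reduction should be stated: prove the LDPs for $\widetilde\Phi_N$, then transfer to $\Phi_N$ using $\Phi_N=D_{N-1}^{-1}\widetilde\Phi_N D_{-1}$ and the moment bound (A) to control $\|D_{N-1}^{\pm1}\|,\|D_{-1}^{\pm1}\|$ by $e^{\epsilon N}$ with overwhelming probability. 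A small wrinkle is that $D_{-1}F$ need not be Lagrangian, so your Lagrangian-Grassmannian argument for the uniform lower bound on $\|\Lambda^W\widetilde\Phi_N\,\xi\|$ does not literally apply to $\xi=\xi_{D_{-1}F}$; this is harmless because the top Lyapunov exponent of $\Lambda^W\widetilde\Phi_N$ is simple (the gap is $\gamma_W-\gamma_{W+1}=2\gamma_W>0$), and in any case the only subspaces the paper ever feeds into Proposition~\ref{prop:LDP} are $F_\pm$, which are $D_{-1}$-invariant.
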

The proof is essentially given in \cite{KLS}; we outline the necessary reductions in Section~\ref{s:sympl}. The second proposition could be also proved along the lines of the special case considered in \cite{KLS}; we present an alternative (arguably, simpler) argument in Section~\ref{s:Weg}.

For an operator $H$ and $E$ in the resolvent set of $H$, we denote by $G_E[H] = (H-E)^{-1}$ the resolvent of $H$ and by $G_E[H](\cdot, \cdot)$ its matrix elements. If $E$ lies in the spectrum of $H$, we set $G_E[H](\cdot, \cdot) \equiv  \infty$.
\begin{prop}\label{prop:Weg} Assume (A)--(C). For any $\epsilon > 0$ there exist $C, c> 0$ such that for any $E \in I$ and $N \geqslant 1$  
\begin{equation} \label{eq:ourweg}\begin{split}
\mathbb P \left\{  \|G_E[H_{[-N,N]}](i, i)   \|\leqslant e^{-\epsilon N} \right\} \leqslant C e^{-cN}  \quad &(i \in [-N,N])\\
\mathbb P \left\{  \|G_E[H_{[-N,N]}](i, i\pm 1)    \|\leqslant e^{-\epsilon N} \right\} \leqslant C e^{-cN}  \quad &(i , i \pm 1\in [-N,N])
\end{split}
\end{equation}
\end{prop}
\begin{rmk} The arguments which we present can be applied to deduce the following strengthening of (\ref{eq:ourweg}):
\[ \mathbb P \left\{ \dist(E, \sigma(H_{[-N,N]})) \leqslant e^{-\epsilon N} \right\} \leqslant C e^{-cN}~. \]
We content ourselves with (\ref{eq:ourweg}) which suffices for the proof of the main theorems.
\end{rmk}

Klein, Lacroix and Speis \cite{KLS} use (special cases of) Propositions~\ref{prop:LDP} and \ref{prop:Weg} to verify the assumptions required for multi-scale analysis. We deduce Theorems~\ref{thm:1} and \ref{thm:2} directly from these propositions. In this aspect, our general strategy is similar to the cited works \cite{BDFGVWZ,JZ,GorKl}. However, several of the arguments employed in these works rely on the special features of the model for $W=1$; therefore our implementation of the strategy differs in several crucial aspects.

\paragraph{Acknowledgement} We are grateful to Ilya Goldsheid for helpful discussions,  and to Alexander Elgart for spotting a number of lapses in a preliminary version of this paper.

\section{Proof of the main theorems}

\subsection{Resonant sites; the main technical proposition}

Let $\tau > 0$ be a (small) number. We say that $x\in \mathbb Z$ is $(\tau, E, N)$-non-resonant ($x \notin\Res(\tau, E,N)$) if 
\begin{equation}\label{eq:nonres}\begin{cases}
\|L_x\| \leqslant e^{\tau N}~, \\
\|G_E[H_{[x-N, x+N]}](x, x\pm N)\| \leqslant e^{-(\gamma_W(E) - \tau)N}~,
\end{cases}\end{equation}
and $(\tau, E,N)$-resonant ($x \in\Res(\tau, E,N)$) otherwise.  The following proposition is the key step towards the proof of Theorems~\ref{thm:1} and \ref{thm:2}.

\begin{prop}\label{prop:main} Assume (A)--(C). Let $I \subset \mathbb R$ be a compact interval, and let $\tau > 0$. There exist $C, c> 0$ such that for any $N \geqslant 1$
\[ \mathbb P \left\{ \max_{E \in I} \diam (\Res(\tau, E, N) \cap [-N^2, N^2]) > 2N \right\} \leqslant C e^{-cN}~. \]
\end{prop}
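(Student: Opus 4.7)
The plan is to deduce the proposition from Propositions~\ref{prop:LDP} and \ref{prop:Weg}, combined with an independence argument between disjoint windows. First, the component $\|L_x\| > e^{\tau N}$ of the resonance condition is handled directly by Markov's inequality using assumption~(A): $\mathbb P\{\|L_x\| > e^{\tau N}\} \leq e^{-\eta \tau N}\,\mathbb E\|L_0\|^\eta$, and a union bound over $|x| \leq N^2$ removes these sites on an event of probability at least $1 - Ce^{-cN}$. On the complementary good event, resonance is determined entirely by the Green's function condition.

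Next, I would use the standard representation of the Green's function of a block-tridiagonal operator via the fundamental Dirichlet solutions $\Psi^\pm$ and their Wronskian $\mathcal W$:
\[
G_E[H_{[a,b]}](x,y) = \Psi^-(\min(x,y))\, \mathcal W^{-1}\, \Psi^+(\max(x,y))^\intercal,
\]
in which $\Psi^\pm$ are built from transfer matrix products while $|\det\mathcal W|$ is proportional to $|\det(H_{[a,b]} - E)|$. Combining Proposition~\ref{prop:LDP} at the Dirichlet Lagrangian subspace $F_D = \{(v,0) : v \in \mathbb R^W\}$ with the polynomial-in-$E$ continuity of transfer matrices and a discretization of $I$, I would establish an event $\Omega_1$ of probability at least $1 - Ce^{-cN}$ on which, uniformly in $E \in I$ and $|x| \leq N^2$, the solutions $\Psi^\pm$ on $[x-N, x+N]$ realise their expected exponential behaviour up to error $\tau/10$. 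On $\Omega_1$, the site $x$ can be $(\tau, E, N)$-resonant only if $\mathcal W$ is atypically small, which forces $\dist(E, \sigma(H_{[x-N, x+N]})) \leq e^{-\epsilon N}$ for some $\epsilon = \epsilon(\tau) > 0$.

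The core probabilistic step then becomes a spectral decoupling: if $x_1 < x_2$ with $x_2 - x_1 > 2N$ are both resonant at some $E \in I$, the spectra $\sigma_1 := \sigma(H_{[x_1-N, x_1+N]})$ and $\sigma_2 := \sigma(H_{[x_2-N, x_2+N]})$ must come within $2e^{-\epsilon N}$ of each other. Because the two windows are disjoint, $\sigma_1$ and $\sigma_2$ are independent; conditioning on $\sigma_1$ and applying Proposition~\ref{prop:Weg} to the second box at each of the $O(N)$ eigenvalues of $\sigma_1$ yields $\mathbb P\{\dist(\sigma_1, \sigma_2) < 2e^{-\epsilon N}\} \leq CNe^{-cN}$. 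A union bound over the $O(N^4)$ pairs of sites in $[-N^2, N^2]$ then completes the estimate.

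The hardest part of the argument is making the uniform-in-$E$ control in the second paragraph fully rigorous. This requires choosing a net in $I$ of spacing $e^{-KN}$ with $K = K(\tau)$ large enough, bounding the $E$-Lipschitz constants of $\Psi^\pm$ by $e^{O(N)}$ (routine, since these are polynomial in $E$), and then balancing the union bounds of both Proposition~\ref{prop:LDP} and Proposition~\ref{prop:Weg} over the net against their exponential decay rates. Working with transfer matrices rather than directly with the Green's function keeps the continuity estimates tame: the Green's function bound is extracted only at the end via the Wronskian formula, where the sole potentially large factor $\mathcal W^{-1}$ is precisely what records proximity to the spectrum and is therefore controlled by Wegner on the decoupling step.
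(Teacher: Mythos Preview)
Your reduction of the $\|L_x\|$ contribution and your final spectral--decoupling step (conditioning on one window and applying Proposition~\ref{prop:Weg} at the $O(WN)$ eigenvalues of the other) are both sound and close in spirit to the paper. The real difficulty lies exactly where you flag it, and there the sketch does not close.

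The proposed uniform-in-$E$ large deviation bound for $\Psi^\pm$ via a net of mesh $e^{-KN}$ plus Lipschitz interpolation runs into a rate mismatch. The entries of $\Phi_N(E)$ are polynomials of degree $N$ in $E$; on a high-probability event the $E$-derivative of $\Phi_N$ is governed by $\sum_k \|\Phi_{N,k+1}\|\,\|L_k^{-1}\|\,\|\Phi_{k,0}\|$, which is of order $e^{(\gamma_1+o(1))N}$. To propagate the lower bound $s_W(\Phi_N\pi_F^*)\geqslant e^{(\gamma_W-\epsilon)N}$ from the net to all $E\in I$ you therefore need $K\gtrsim \gamma_1-\gamma_W$, a fixed positive number independent of $\epsilon$. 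But the only control Proposition~\ref{prop:LDP} gives at each net point is $Ce^{-c(\epsilon)N}$ with a rate $c(\epsilon)$ that can be arbitrarily small (and in particular need not exceed $\gamma_1-\gamma_W$). The union bound over $e^{KN}$ net points then fails. A genuine uniform-in-$E$ LDP of this type is a deep statement (compare the parametric Furstenberg theorem of Gorodetski--Kleptsyn) and cannot be obtained by balancing a net against the pointwise bound.

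The paper sidesteps this problem entirely. Instead of seeking uniform control on $\Psi^\pm$, it observes a purely algebraic fact: for each site $x$, the set $\{E\in I : x\ \text{is resonant}\}$ is a union of at most $C_W N$ closed intervals, since by Cramer's rule each matrix element of $G_E[H_{[x-N,x+N]}](x,x\pm N)$ is a rational function of $E$ of degree $\leqslant W(2N+1)$. Two such sets (for $|x-y|>2N$, hence independent) can intersect only if an \emph{endpoint} of one lies in the other. This reduces the question to a fixed-$E$ estimate---$\mathbb{P}\{E\in\Res^*(\tau,x,N)\}\leqslant Ce^{-cN}$---which follows directly from Propositions~\ref{prop:LDP} and \ref{prop:Weg} at that single energy, applied conditionally on the other window. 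No net, no uniform LDP, and only $O(N)$ random energies to test rather than $e^{KN}$.
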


The remainder of this section is organised as follows. In Section~\ref{s:redtm}, we express the Green function in terms of the transfer matrices. Using this expression and Propositions~\ref{prop:LDP} and \ref{prop:Weg}, we  show that the probability that $x \in \Res(\tau, E, N)$ (for a fixed $E \in \mathbb R$) is exponentially small. In Section~\ref{s:pfmainprop}, we rely on this estimate to prove Proposition~\ref{prop:main}. Then we use this proposition to prove Theorem~\ref{thm:1} (Section~\ref{s:pf1}) and Theorem~\ref{thm:2} (Section~\ref{s:pf2}).

\subsection{Reduction to transfer matrices}\label{s:redtm}

Fix $N \geqslant 1$. Consider the $W \times W$ matrices
\begin{equation}\label{eq:psi}
\begin{split}
&\Psi_i^+ = (\mathbbm1 \,\, 0) \, \Phi_{i, N+1} \binom{0}{\mathbbm 1} = (0 \,\, \mathbbm 1) \Phi_{i+1,N+1} \binom{0}{\mathbbm 1}~, \\ 
&\Psi_i^- = (\mathbbm1 \,\, 0) \, \Phi_{i, -N} \binom{\mathbbm 1}{0} = (0 \,\, \mathbbm 1) \Phi_{i+1,-N} \binom{\mathbbm 1}0~. 
\end{split}\end{equation}
The Green function of $H_{[-N,N]}$ can be expressed in terms of these matrices using the following claim, which holds deterministically for any $H$ of the form (\ref{eq:defop}). A similar expression has been employed already in \cite{G80}.
\begin{cl}\label{cl:formG} If $E \notin \sigma(H_{[-N,N]})$, then:
\begin{enumerate}
\item 
\[\binom{\Psi_{\pm1}^\pm}{\Psi_{0}^\pm} G_E[H_{[-N,N]}](0, \pm N)
= \binom{G_E[H_{[-N,N]}](0, \pm 1)}{G_E[H_{[-N,N]}](0,0)} ;\]

\item for any $i,j \in [-N, N]$,
\[ G_E[H_{[-N, N]}](i, i) = \begin{cases}
\Psi_j^{-} (\Psi_i^-)^{-1} \left( \Psi_{i+1}^+ (\Psi_i^+)^{-1} - \Psi_{i+1}^- (\Psi_i^-)^{-1}\right)^{-1} L_i^{-1}~, &i \geq j\\
\Psi_j^{+} (\Psi_i^+)^{-1}  \left( \Psi_{i+1}^+ (\Psi_i^+)^{-1} - \Psi_{i+1}^- (\Psi_i^-)^{-1}\right)^{-1} L_i^{-1}~, &i \leq j~.
\end{cases}\]
\end{enumerate}
\end{cl}

\begin{proof} Abbreviate $G_E = G_E[H_{[-N,N]}]$, and  set $G_E(i, j) = 0$ for $j \notin [-N, N]$. The matrices $G_E(i, j)$, $-N \leqslant j \leqslant N$, are uniquely determined by the system of equations
\begin{equation}\label{eq:condG}
L_j G_E(i, j+1) + (V_j - E \mathbbm 1) G_E(i, j) + L_{j-1}^\intercal G_{E}(i,j-1) = \delta_{j,i} {\mathbbm1}~, \quad -N \leqslant j \leqslant N~.
\end{equation}
We look for a solution of the form
\begin{equation}\label{eq:GviaPsi} G_E(i, j) = \begin{cases}
\Psi_j^- \alpha_i^-~, j\leqslant i \\
\Psi_j^+ \alpha_i^+~, j\geqslant i~,
\end{cases}\end{equation}
where 
\begin{eqnarray}
\Psi_i^- \alpha_i^- - \Psi_i^+ \alpha_i^+  &=& 0 \label{eq:al1} \\
\Psi_{i+1}^- \alpha_i^- - \Psi_{i+1}^+ \alpha_i^+ &=& - L_i^{-1}~. \label{eq:al2}
\end{eqnarray}
The first equation ensures that (\ref{eq:GviaPsi}) defines $G_E(i,i)$ consistently, while the second one guarantees that (\ref{eq:condG}) holds for $j=i$. For the other values of $j$, (\ref{eq:condG}) follows from the construction of the matrices $\Psi^{\pm}_j$. 

The solution to (\ref{eq:al1})--(\ref{eq:al2}) is explicitly found by elimination:
\[ \alpha_i^- = (\Psi_i^-)^{-1} \Psi_i^+ \alpha_i^+~, \quad 
\alpha_i^+ = - ( \Psi_{i+1}^- (\Psi_i^-)^{-1} \Psi_i^+ - \Psi_{i+1}^+)^{-1} L_i^{-1}~.
\]
This implies the second part of the claim. For the first part, note that for $j \geqslant i$
\[ G_E(0,j) = \Psi_j^+ \alpha_0^+ = \Psi_j^+ (\Psi_0^+)^{-1} G_E(0,0) =  \Psi_j^+ (\Psi_1^+)^{-1} G_E(0,1)~.\]
Observing that $\Psi_N^+ = \mathbbm1$, we conclude that 
\[  G_E(0,N) =  (\Psi_0^+)^{-1} G_E(0,0) =    (\Psi_1^+)^{-1} G_E(0,1)~,\]
as claimed.  Similarly,
\[  G_E(0,-N) =  (\Psi_0^-)^{-1} G_E(0,0) =    (\Psi_{-1}^-)^{-1} G_E(0,-1)~. \qedhere\]
\end{proof}

\subsection{Proof of Proposition~\ref{prop:main}}\label{s:pfmainprop}

Fix a small $\tau > 0$. Without loss of generality $I$ is short enough to ensure that 
\[ \max_{E \in I} \gamma_W(E) - \min_{E \in I} \gamma_W(E) \leqslant \frac\tau 2\]
(this property is valid for short intervals due to the continuity of $\gamma_W$; the statement for larger intervals $I$ follows by compactness). Fix such $I$ (which will be suppressed from the notation), and let
\[ \gamma = \frac12 (\max_{E \in I} \gamma_W(E) + \min_{E \in I} \gamma_W(E))~, \quad \text{so that} 
\quad \sup_{E \in I} | \gamma_W(E) - \gamma| \leqslant \frac\tau4~.\]
For $x \in \mathbb Z$, let  
\[ \Res^*(\tau, x, N) = \left\{ E \in I \, : \, \max_\pm \|G_E[H_{[x-N,x+N]}](x,x\pm N) \|_{1,\infty} \geqslant e^{-(\gamma(E) - \frac\tau2)N} \right\}~,\]
where $\|A\|_{1,\infty} = \max_{1 \leqslant \alpha, \beta \leqslant W} |A_{\alpha,\beta}|$. For $N$ large enough ($N \geqslant N_0(\tau)$),
\[ \left( \| L_x \| \leqslant e^{\tau N} \right) \,\, \text{and} \,\, \left( E \notin \Res^*(\tau, x, N) \right) \Longrightarrow x \notin \Res(\tau, E, N)~.\]
By (A) and the Chebyshev inequality
\[ \mathbb P \left\{ \exists x \in [-N^2, N^2] \, : \, \|L_x\| \geqslant e^{\tau N} \right\}
\leqslant (2N^2 + 1) \frac{\mathbb E \|L_0\|^\eta}{e^{\tau \eta N}} \leqslant C_1 e^{-c_1N}~. \]
Hence the proposition boils down to the following statement:
\begin{equation}\label{eq:intersect}
|x-y| >2N \Longrightarrow 
\mathbb P \left\{ \Res^*(\tau, x, N) \cap \Res^*(\tau, y, N) \neq \varnothing \right\} \leqslant C e^{-cN}~.
\end{equation}
The proof of (\ref{eq:intersect}) rests on two claims. The first one is deterministic:

\begin{cl}\label{cl:un} $\Res^*(\tau, x, N)$ is the union of at most $C_W N$ disjoint closed intervals.
\end{cl}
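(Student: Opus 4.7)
The plan is to express $\Res^*(\tau, x, N) \cap I$ as a union of a controlled number of polynomial nonnegativity sets in $E$, and then invoke the elementary fact that the nonnegativity set of a real polynomial of degree $d$ is a union of at most $O(d)$ closed intervals in $\mathbb R$. First, I would apply Cramer's rule to the $n \times n$ matrix $E\mathbbm 1 - H_{[x-N,x+N]}$, where $n := (2N+1)W$, to express every entry of the off-diagonal block of the Green function as a rational function in $E$:
\[ G_E[H_{[x-N,x+N]}](x, x\pm N)_{\alpha\beta} = \frac{P^{\pm}_{\alpha\beta}(E)}{D(E)}, \]
where $D(E) = \det(E\mathbbm 1 - H_{[x-N,x+N]})$ is a real polynomial of degree $n$, and $P^{\pm}_{\alpha\beta}(E)$ is an off-diagonal cofactor, a real polynomial of degree at most $n-1$.

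Next, I would set $h := e^{-(\gamma - \tau/2) N}$ and treat the right-hand side of the inequality defining $\Res^*$ as $E$-independent (any residual $E$-dependence through $\gamma_W(E)$ can be absorbed into $\tau$ using $|\gamma_W(E) - \gamma| \leq \tau/4$). The condition $\max_\pm \|G_E(x, x\pm N)\|_{1,\infty} \geq h$ is then equivalent to the disjunction, over the $2W^2$ triples $(\pm, \alpha, \beta)$, of the polynomial inequalities $P^{\pm}_{\alpha\beta}(E)^2 \geq h^2 D(E)^2$; hence
\[ \Res^*(\tau, x, N) \cap I = \bigcup_{\pm, \alpha, \beta} \left\{ E \in I : Q^{\pm}_{\alpha\beta}(E) \geq 0 \right\}, \]
where $Q^{\pm}_{\alpha\beta}(E) := P^{\pm}_{\alpha\beta}(E)^2 - h^2 D(E)^2$ is a real polynomial in $E$ of degree at most $2n$. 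Each summand on the right-hand side is a union of at most $n+1$ closed intervals, since the sign of $Q^{\pm}_{\alpha\beta}$ is constant between consecutive real roots; taking the union over the $2W^2$ indices and merging any overlaps then yields at most $2W^2(n+1) \leq C_W N$ disjoint closed intervals, as required.

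The only point that calls for a small comment is the treatment of eigenvalues of $H_{[x-N,x+N]}$ (the zeros of $D$), where the rational expression for the Green function is singular. The polynomial formulation $Q^{\pm}_{\alpha\beta} \geq 0$ automatically includes such $E$ whenever the numerator $P^{\pm}_{\alpha\beta}$ does not vanish simultaneously (if it does, the singularity is removable and the corresponding matrix entry extends continuously). This is consistent with the natural convention $\|G_E\|_{1,\infty} = +\infty$ at a pole, which places such energies in $\Res^*$ and guarantees that $\Res^*$ is closed. I do not anticipate any serious obstacle beyond this bookkeeping.
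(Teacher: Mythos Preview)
Your proof is correct and follows essentially the same route as the paper: both apply Cramer's rule to write each entry of $G_E[H_{[x-N,x+N]}](x,x\pm N)$ as a rational function of $E$ with numerator and denominator of degree at most $W(2N+1)$, and then bound the number of components of the superlevel set by a root count. The only cosmetic difference is that the paper counts the solutions of $|g_{\alpha\beta}^\pm(E)|=h$ directly (and remarks that poles cannot be endpoints of the superlevel set), whereas you square and work with the polynomial $Q_{\alpha\beta}^\pm = (P_{\alpha\beta}^{\pm})^2 - h^2 D^2$, which absorbs the pole issue automatically; both arguments yield the same $C_W N$ bound.
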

\begin{proof}
By Cramer's rule, for each $\alpha,\beta \in \{1, \cdots, W\}$ and $\pm$ the function
\[ g_{\alpha,\beta}^\pm: E \mapsto (G_E[H_{[x-N,x+N]}](x, x\pm N))_{\alpha,\beta} \]
is the ratio of two polynomials of degree $\leqslant W(2N+1)$. Hence the level set 
\[ \left\{ E \, : \,  |g_{\alpha,\beta}^\pm(E)| = e^{-(\gamma - \frac \tau 2)N} \right\} \]
is of cardinality $\leqslant W(2N+1)$ (note that the $\leqslant W(2N+1)$ discontinuity points of $g_{\alpha,\beta}^\pm$ are poles, hence they can not serve as the endpoints of the superlevel sets of this function). Hence  our set
\[ \left\{ E \, : \,  |g_{\alpha,\beta}^\pm(E)| \geqslant e^{-(\gamma - \frac \tau 2)N} \right\} \]
is the union of at most $\leqslant W(2N+1)/2$ closed intervals, and $\Res^*(\tau, x, N)$ is the union of at most 
\[ 2 \, \frac{W(W+1)}{2} \, \frac{W(2N+1)}{2} \leqslant C_W N\]
closed intervals.
\end{proof}

\begin{cl}\label{cl:prob} Assume (A)--(C). For any compact interval $I \subset \mathbb R$ there exist $C, c> 0$ such that for any $N \geqslant 1$ and any $E \in I$, 
\[ \mathbb P\left\{ E \in \Res^*(\tau, x, N) \right\} \leqslant C e^{-cN}~.\]
\end{cl}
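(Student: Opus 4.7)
The strategy is to combine the Green-function representation of Claim~\ref{cl:formG} with Propositions~\ref{prop:LDP} and~\ref{prop:Weg}. Let $\Psi^+_{x, i}$ denote the translate of $\Psi^+_i$ from~\eqref{eq:psi} adapted to the interval $[x-N,x+N]$ (with $0$ replaced by $x$ throughout). The second of the two expressions given for $\Psi^+_i$ in~\eqref{eq:psi} shows that
\[
\binom{\Psi^+_{x,1}}{\Psi^+_{x,0}} \;=\; \Phi_{x+1,\,x+N+1}\,\pi_{F_-}^*~,
\]
where $F_- = \{0\}\oplus \mathbb R^W \subset \mathbb R^{2W}$ is the canonical Lagrangian subspace and $\pi_{F_-}^*: F_- \hookrightarrow \mathbb R^{2W}$ is the isometric inclusion. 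Meanwhile, the proof of Claim~\ref{cl:formG}(1) gives, for $E \notin \sigma(H_{[x-N,x+N]})$,
\[
\binom{G_E(x, x+1)}{G_E(x, x)} \;=\; \binom{\Psi^+_{x,1}}{\Psi^+_{x,0}}\,G_E(x, x+N)~,
\]
with $G_E = G_E[H_{[x-N,x+N]}]$. Applying the elementary inequality $\|Y\| \leqslant \|X\|/s_W(A)$ to any matrix equation $X = AY$ in which $A$ has full column rank, we obtain the deterministic bound
\[
\|G_E(x, x+N)\| \;\leqslant\; \frac{\sqrt{2}\,\|(H_{[x-N,x+N]} - E)^{-1}\|}{s_W\!\bigl(\Phi_{x+1,\,x+N+1}\,\pi_{F_-}^*\bigr)}~,
\]
and a symmetric bound for $G_E(x, x-N)$ obtained by using $F_+ = \mathbb R^W \oplus \{0\}$ and the cocycle $\Phi_{x-1,\,x-N-1}$ in place of $F_-$ and $\Phi_{x+1,\,x+N+1}$.

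The plan is then to apply Propositions~\ref{prop:Weg} and~\ref{prop:LDP} with accuracy $\tau/16$. Since $\|(H-E)^{-1}\| = 1/\dist(E,\sigma(H))$, Proposition~\ref{prop:Weg} yields $\|(H_{[x-N,x+N]} - E)^{-1}\| \leqslant e^{\tau N/16}$ with probability $\geqslant 1 - Ce^{-cN}$. Proposition~\ref{prop:LDP} applied to the Lagrangian $F_\mp$ (using translation invariance of the underlying distributions and the fact that the Lyapunov spectrum of the inverse cocycle coincides with that of the forward one, so that the proposition applies uniformly in $x$ and regardless of the orientation) yields $s_W(\Phi\,\pi_{F_\mp}^*) \geqslant e^{(\gamma_W(E) - \tau/16)N}$ on an event of the same probability.

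Combining the two estimates and using $\|\cdot\|_{1,\infty} \leqslant \|\cdot\|$ together with the bound $\gamma_W(E) \geqslant \gamma - \tau/4$ valid on $I$, we find $\|G_E(x, x\pm N)\|_{1,\infty} \leqslant \sqrt{2}\, e^{-(\gamma - 3\tau/8)N}$ on a set of probability $\geqslant 1 - Ce^{-cN}$. For $N \geqslant N_0(\tau)$ this is strictly smaller than $e^{-(\gamma - \tau/2)N}$, so $E \notin \Res^*(\tau,x,N)$ on the good event. The whole argument is essentially algebraic once the two main propositions are in hand; the one conceptual point is the identification of the $2W \times W$ matrix $\binom{\Psi^+_{x,1}}{\Psi^+_{x,0}}$ as the restriction of the transfer cocycle to a Lagrangian subspace, which allows Proposition~\ref{prop:LDP} to be invoked directly, bypassing any transversality-type argument for projections of random Lagrangians that would otherwise be needed were one to try to lower-bound $s_W(\Psi^\pm_{x, 0})$ individually.
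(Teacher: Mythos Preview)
Your proof is correct and follows the same approach as the paper: bound $\|G_E(x, x\pm N)\|$ via Claim~\ref{cl:formG}(1) by $\bigl(s_W\binom{\Psi_{\pm1}^\pm}{\Psi_0^\pm}\cdot\dist(E,\sigma)\bigr)^{-1}$ and then apply Propositions~\ref{prop:LDP} and~\ref{prop:Weg} to the two factors. Your write-up is simply more explicit than the paper's two-line version (in particular, you spell out the identification $\binom{\Psi_1^+}{\Psi_0^+}=\Phi_{x+1,x+N+1}\pi_{F_-}^*$ that makes the appeal to Proposition~\ref{prop:LDP} transparent), and the orientation of the identity you extract from the proof of Claim~\ref{cl:formG}(1) is indeed the one that proof establishes; the only slip is that the minus-side cocycle should be $\Phi_{x,x-N}$ rather than $\Phi_{x-1,x-N-1}$, which is immaterial.
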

\begin{proof} According to Claim~\ref{cl:formG},
 \[ \| G_E[H_{[-N,N]}](0, \pm N)\| \leqslant \left\{ s_W \binom{\Psi_{\pm1}^\pm}{\Psi_{0}^\pm} \right\}^{-1} \,  \| \binom{G_E[H_{[-N,N]}](0, \pm 1)}{\|G_E[H_{[-N,N]}](0,0)}\|~;\]
hence
\[ \begin{split}
&\mathbb P \left\{  \|G_E[H_{[-N,N]}](0, \pm N) \| \geqslant e^{-(\gamma_W(E) - \frac\tau4)N)} \right\} \\
&\quad\leqslant \mathbb P\left\{ s_W \binom{\Psi_{\pm 1}^\pm}{\Psi_0^\pm} \leqslant e^{(\gamma_W(E) - \frac\tau8)N } \right\} 
+ \mathbb P\left\{\| \binom{G_E[H_{[-N,N]}](0, \pm 1)}{\|G_E[H_{[-N,N]}](0,0)}\| \geqslant e^{\frac{\tau}{8} N} \right\} ~.
\end{split}\]
By Propositions~\ref{prop:LDP} and \ref{prop:Weg}, both terms decay exponentially in $N$,
locally uniformly in $E$. 
\end{proof}

Now we can prove (\ref{eq:intersect}). By Claim~\ref{cl:un} both $\Res^*(\tau, x, N)$ and $\Res^*(\tau, y, N)$ are unions of at most $C_W N$ closed intervals. If these two sets intersect, then either one of the edges of the intervals composing the first one lies in the second one, or vice versa. The operators $H_{[x-N, x+N]}$ and $H_{[y-N, y+N]}$ are independent due to the assumption $|x-y|> 2N$, hence by Claim~\ref{cl:prob}  
\[ \mathbb P \left\{ \Res^*(\tau, x, N) \cap \Res^*(\tau, y, N) \neq \varnothing \right\}  \leqslant 4 C_W N \times C e^{-cN} \leqslant C_1 e^{-c_1 N}~. \]
This concludes the proof of (\ref{eq:intersect}) and of Proposition~\ref{prop:main}.  \qed

\subsection{Spectral localisation: proof of Theorem~\ref{thm:1}}\label{s:pf1}

The proof of localisation is based on Schnol's lemma, which we now recall (see \cite{Han} for a version applicable in the current setting). A function $\psi: \mathbb Z \to \mathbb C^W$ is called a generalised eigenfunction corresponding to a generalised eigenvalue $E \in \mathbb R$ if 
\begin{eqnarray} 
&L_x \psi(x+1) + V_x \psi(x) + L_{x-1}^\intercal \psi(x-1) = E \psi(x)~, \quad x \geqslant 0  \label{eq:efeq}\\
&\limsup_{|x| \to \infty} \frac{1}{|x|} \log \|\psi(x)\| = 0~.\label{eq:grgef}
\end{eqnarray}
Schnol's lemma asserts that any spectral measure of $H$ is supported on the set of generalised eigenvalues. Thus we need to show that (with full probability) any generalised eigenpair $(E, \psi)$ satisfies
\begin{equation}\limsup_{|x| \to \infty} \frac{1}{|x|} \log \|\psi(x)\| \leqslant - \gamma_W(E)~. \end{equation}

Fix a compact interval $I \subset \mathbb R$, and $\tau > 0$. Consider the events 
\[   \mathcal G_M(I, \tau) = \left\{ \forall E \in I \,\,  \forall N \geqslant M  \,\, \diam  ( \Res(\tau, E, N) \cap [-N^2, N^2])\leqslant 2N \right\}~. \]
By Proposition~\ref{prop:main} and the Borel--Cantelli lemma,
\[ \mathbb P \left(\bigcup_{M \geqslant 1}   \mathcal G_M(I, \tau)\right) = 1~. \]
We shall prove that on any $\mathcal G_M(I, \tau)$ every generalised eigenpair $(E, \psi)$ with $E \in I$ satisfies 
\begin{equation}\label{eq:need4tau} \limsup_{|x| \to \infty} \frac{1}{|x|} \log \|\psi(x)\| \leqslant - \gamma_W(E) +3\tau~. \end{equation}
From (\ref{eq:efeq}), we have for any $x$
\[\psi(x) = - G_E[H_{[x-N, x+N]}] (x, x-N) L_{-N-1}^\intercal \psi(x-N-1) - G_E[H_{[x-N, x+N]}] (x, x+N) L_{N} \psi(x+N+1)~.\]
If $x \notin \Res(\tau, E, N)$, this implies
\[\begin{split}
\| \psi(x)\| &\leqslant e^{-(\gamma_W(E) - 2\tau) N} (\| \psi(x-N-1)\| + \|\psi(x+N+1)\|) \\
&\leqslant 2e^{-(\gamma_W(E) - 2\tau) N} \max(\| \psi(x-N-1)\|, \|\psi(x+N+1)\|)~,\end{split}\]
whence $f_\tau(x) \overset{\text{def}}= e^{-\tau |x|} \|\psi(x)\|$ satisfies
\begin{equation}\label{eq:subh}f_\tau(x) 
\leqslant 2 e^{-(\gamma_W(E) - 3\tau) N}\max(f_\tau(x-N-1), f_\tau(x+N+1)))~.\end{equation}
The function $f_\tau$ is bounded due to (\ref{eq:grgef}), hence it achieves a maximum at some $x_\psi \in \mathbb Z$. For 
\[ N > \log 2 / (\gamma_W(E) - 3\tau)~,\]
 (\ref{eq:subh}) can not hold at $x = x_\psi$, thus on $\mathcal G_M(I, \tau)$ x for all 
\[ N \geqslant N_0 \overset{\text{def}}{=} \max(M, \log 2 / (\gamma_W(E) - 3\tau), |x_\psi|) \]
we have:
\[ \Res(\tau, E, N) \cap [-N^2, N^2] \subset [x_\psi - 2N, x_\psi + 2N] \subset [-3N, 3N]~.\]
Thus (\ref{eq:subh}) holds whenever $x, N$ are such that $3N < |x| \leqslant N^2$ and $N \geqslant N_0$.

For each $x \in \mathbb Z$, let $N(x)$ be such that $N^2/10 \leqslant |x| \leqslant N^2 / 5$. If $|x|$ is large enough, $N(x) \geqslant N_0$. Applying (\ref{eq:subh}) $\lfloor |x|/(N+1) \rfloor - 4$ times, we obtain
\[ f_\tau(x) \leqslant  (2e^{-(\gamma_W(E)-3\tau)N})^{\lfloor x/(N+1)\rfloor - 4} \times \max f_\tau 
\leqslant e^{-(\gamma_W(E)-3\tau)|x| + C(\sqrt{|x|}+1)} \times \max f_\tau~, \]
which implies (\ref{eq:need4tau}).
\qed

\subsection{Eigenfunction correlator: proof of Theorem~\ref{thm:2}}\label{s:pf2}

Fix a compact interval $I \subset \mathbb R$, and let $\gamma = \min_{E \in I} \gamma_W(E)$. 
The proof of (\ref{eq:thm2}) relies on the following fact from \cite[Lemma 4.1]{ESS}, based on an idea from \cite{AW}:
\begin{equation}\label{eq:Qint}  Q_I^\Lambda (x, y) \leqslant \lim_{\epsilon \to + 0} \frac{\epsilon}{2} \int_I \! \| G_E[H_\Lambda](x, y)\|^{1-\epsilon} dE \leqslant W~.\end{equation}
Our goal is to bound on this quantity uniformly in the interval $\Lambda \supset \{x,y\}$. Without loss of generality we can assume that $x = 0$. Choose $N$ such that $N^2/10 \leqslant |y| \leqslant N^2/5$. By Proposition~\ref{prop:main}, for any $\tau \in (0, \gamma)$
\[ \mathbb P \left\{ \forall E \in I \, \diam(\Res(\tau, E, N) \cap [-N^2, N^2]) \leqslant 2N\right\} \geqslant 1 - Ce^{-cN}~.  \]
We show that on the  event 
\begin{equation}\label{eq:event} \left\{ \forall E \in I \, \diam(\Res(\tau, E, N) \cap [-N^2, N^2]) \leqslant 2N\right\} \end{equation}
we have
\begin{equation}\label{eq:needQ}
Q_I^\Lambda(0,y) \leqslant e^{-(\gamma - 2\tau)|y|}~, \quad |y| > C_0(\gamma-\tau)~.
\end{equation}
Expand the Green function $G_E[H_\Lambda](0, y)$ as follows. First,  iterate the resolvent identity
\[\begin{split}
G_E[H_\Lambda](x, y) &= G_E[H_{[x - N, x + N]}](x, x-N) L_{x-N-1}^\intercal G_E[H_\Lambda](x-N-1, y) \\
&+G_E[H_{[x - N, x + N]}](x, x + N) L_{x+N}\,\,\,\,\,\,\,G_E[H_\Lambda](x+N+1, y) \end{split} \]
starting from $x = 0$ at most $|y|/N$ times, or until the first argument of $G_E[H_\Lambda]$ reaches the set $\Res(\tau, E, N)$. Then apply the identity 
\[\begin{split}
G_E[H_\Lambda](x, u) &= G_E[H_\Lambda](x, u - N -1) L_{u-N-1}G_E[H_{[u-N,u+N]}](u-N,  u) \\
&+ G_E[H_\Lambda](x, u +N +1) L_{u+N}^\intercal \,\,\,\,\,\,\,G_E[H_{[u-N, u+N]}](u+N,  u)  \end{split} \]
starting from $u = y$ at most $|y|/N$ times, or until the second argument of $G_E[H_\Lambda]$ reaches the set $\Res(\tau, E, N)$. 
The resulting expansion has $\leqslant 2^{2|y|/N}$ addends, each of which has the form
\begin{equation}\label{eq:gfexp}\begin{split}
& G_E[H_{[x_0 - N, x_0 + N]}](x_0, x_1) \cdots G_E[H_{[x_{k-1} - N, x_{k-1} + N]}](x_{k-1}, x_k) \\
&\qquad G_E[H_\Lambda](x_k, y_\ell) \\
&\qquad G_E[H_{[y_{\ell-1} - N, y_{\ell-1} + N]}](y_\ell, y_{\ell-1}) \cdots G_E[H_{[y_0 - N, y_0 + N]}](y_1, y_0)~, 
\end{split}
\end{equation}
where $x_0 = 0$, $x_{j+1} = x_j \pm N$, $y_0 = y$, $y_{j+1} = y_j\pm N$, and (by the construction of the event (\ref{eq:event})) $k + \ell \geqslant |y|/N-4$. All the terms in the first and third line of (\ref{eq:gfexp}) are bounded in norm by $e^{-(\gamma-\tau)N}$, hence 
\[ \| G_E[H_\Lambda](0, y) \| \leqslant 64 \left(4 e^{-(\gamma-\tau)N}\right)^{|y|/N-4} \sum_{u, v \leqslant 2|y|} \|G_E[H_\Lambda](u, v)\|~.\]
Now we raise this estimate to the power $1 -\epsilon$ and integrate over $E \in I$:
\[ \frac{\epsilon}{2} \int_I  \| G_E[H_\Lambda](0, y) \|^{1-\epsilon} dE \leqslant 64^{1-\epsilon} \left(4 e^{-(\gamma-\tau)N}\right)^{(1-\epsilon)(|y|/N-4)} \sum_{u, v \leqslant 2|y|} \frac\epsilon2 \int_I \|G_E[H_\Lambda](u, v)\|^{1-\epsilon} dE~.\]
It remains to let $\epsilon \to + 0$ while making use of the two inequalities in (\ref{eq:Qint}).
 \qed

\section{Properties of transfer matrices}\label{s:tm}

\subsection{Preliminaries}\label{s:sympl}

Denote 
\[ J = \left( \begin{array}{cc} 0 & - \mathbbm 1 \\ \mathbbm 1 & 0 \end{array}\right) \in \GL(2W, \mathbb R)~. \]
A matrix $Q \in \GL(2W, \mathbb R)$ is called symplectic, $Q \in \Sp(2W, \mathbb R)$, if $Q^\intercal J Q = J$. 

The matrices $T_x$ are, generally speaking, not  symplectic. However, the cocycle $\{\Phi_{x,y}\}_{x,y,\in\mathbb Z}$ is  conjugate to a  symplectic one. Indeed, observe that 

\begin{cl} If $L \in \GL(W, \mathbb R)$ and $Z$ is $W \times W$ real symmetric, then $Q(L, Z) = \left( \begin{array}{cc} L^{-1} Z  & -L^{-1} \\ L^\intercal & 0 \end{array} \right)$ is  symplectic.
\end{cl}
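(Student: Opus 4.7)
The plan is to verify the symplectic identity $Q^\intercal J Q = J$ by a direct block-matrix computation, regarding $Q = Q(L,Z)$ as a $2 \times 2$ block matrix with $W \times W$ entries. The only non-trivial inputs are the symmetry of $Z$ (so that $Z^\intercal = Z$) and the elementary identity $(L^{-1})^\intercal L^\intercal = \mathbbm 1$; no structural result about $\Sp(2W, \mathbb R)$ is required.

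First I would take the block transpose of $Q$, using $Z^\intercal = Z$, to get
\[ Q^\intercal = \begin{pmatrix} Z (L^{-1})^\intercal & L \\ -(L^{-1})^\intercal & 0 \end{pmatrix}. \]
Next I would compute $JQ$ from the block action of $J$ (which swaps block rows and inserts one sign),
\[ JQ = \begin{pmatrix} -L^\intercal & 0 \\ L^{-1} Z & -L^{-1} \end{pmatrix}. \]
Finally I would multiply $Q^\intercal \cdot (JQ)$ block by block. The top-left entry simplifies as $-Z (L^{-1})^\intercal L^\intercal + L \cdot L^{-1} Z = -Z + Z = 0$; the top-right entry is $L \cdot (-L^{-1}) = -\mathbbm 1$; the bottom-left entry is $-(L^{-1})^\intercal \cdot (-L^\intercal) = \mathbbm 1$; and the bottom-right entry is $0$. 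Collecting these four blocks reproduces exactly $J$.

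I do not anticipate any genuine obstacle here: the statement is a one-line calculation, and the only subtlety is keeping track of the transpose of the $(1,1)$-block $L^{-1} Z$, for which the symmetry of $Z$ is essential (without it the top-left entry would not cancel). In the paper I would present the argument as a single block-matrix display followed by the four entry cancellations, rather than expanding each entry in isolation.
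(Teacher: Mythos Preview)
Your computation is correct; the paper itself states this claim without proof, treating it as an elementary verification left to the reader. Your direct block-matrix check of $Q^\intercal J Q = J$ is precisely the argument one would supply, and your identification of the symmetry of $Z$ as the point where the hypothesis enters is accurate.
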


Denote $D_x =\left( \begin{array}{cc} \mathbbm 1  & 0 \\ 0 & L_x^\intercal \end{array} \right)$, then
\[ \widetilde T_x(E) \overset{\text{def}}= D_x T_x(E) D_{x-1}^{-1} = Q(L_x, E\mathbbm 1-V_x) \in \Sp(2W, \mathbb R)~. \]
Thus also 
\[ \widetilde \Phi_{x, y}(E) = D_{x-1} \Phi_{x, y}(E) D_{y-1}^{-1} = 
  \begin{cases}
\widetilde T_{x-1} (E)\cdots \widetilde T_y(E)~, & x > y \\
\mathbbm{1}~, & x = y \\
\widetilde T_{x}^{-1} (E)\cdots T_{y-1}^{-1}(E)~, & x < y 
\end{cases} \,\,\,\,\,\, \in \Sp(2W, \mathbb R)~. \]

\subsection{Simplicity of the Lyapunov spectrum and large deviations}\label{s:simpl}

Goldsheid and Margulis showed \cite{GM} that if $g_j$ are independent, identically distributed random matrices in $\Sp(2W, \mathbb R)$, and the group generated by the support of $g_1$ is Zariski dense in $\Sp(2W, \mathbb R)$, then the Lyapunov spectrum of a random matrix product $\{ g_N \cdots g_1 \}$ is simple, i.e.\ 
\[ \gamma_1 > \cdots > \gamma_W> 0~.\]
Goldsheid showed \cite{G95} that if $\mathcal V$ is irreducible and $\mathcal V - \mathcal V$ contains a rank-one matrix, then for any $E \in \mathbb R$ the group generated by $Q(\mathbbm 1, E \mathbbm 1 - V)$, $V \in \mathcal V$, is Zariski dense in $\Sp(2W, \mathbb R)$. 
\begin{cor} Assume (A)--(C). Then for any $E \in \mathbb R$
\[ \gamma_1(E) > \cdots > \gamma_W(E) > 0~.\]
\end{cor}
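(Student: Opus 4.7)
My strategy is to deduce the corollary from the Goldsheid--Margulis theorem \cite{GM} applied to the i.i.d.\ symplectic cocycle $\widetilde\Phi(E)$ constructed in Section~\ref{s:sympl}. First, the conjugation $\widetilde\Phi_{x,y}(E)=D_{x-1}\Phi_{x,y}(E)D_{y-1}^{-1}$, combined with the bound $\|D_x^{\pm 1}\|\leq 1+\|L_x\|+\|L_x^{-1}\|$ and assumption~(A), will ensure both that $\Phi$ and $\widetilde\Phi$ share the same Lyapunov spectrum and that the integrability hypothesis $\mathbb{E}|\log\|\widetilde T_0(E)^{\pm 1}\||<\infty$ required by \cite{GM} is met. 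It will therefore suffice to show that the Zariski closure $H$ in $\Sp(2W,\mathbb{R})$ of the subgroup generated by the support
\[ \mathcal{S}(E) = \{ Q(L,\, E\mathbbm 1 - V) : L \in \mathcal{L},\, V \in \mathcal{V}\} \]
of $\widetilde T_0(E)$ is the whole of $\Sp(2W, \mathbb{R})$; for then \cite{GM} will deliver simplicity $\gamma_1(E) > \cdots > \gamma_{2W}(E)$ of the full symplectic spectrum, and the symmetry $\gamma_{W+1}(E) = -\gamma_W(E)$ will force $\gamma_W(E) > 0$.

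The Zariski density will rest on two algebraic ingredients. First, the block computation
\[ Q(L_1, E\mathbbm 1 - V) \, Q(L_2, E\mathbbm 1 - V)^{-1} = \operatorname{diag}\!\bigl(L_1^{-1} L_2,\, (L_1^{-1} L_2)^{-T}\bigr) \]
(valid for any $L_1, L_2 \in \mathcal{L}$ and $V \in \mathcal{V}$) will show that $H$ contains every block-diagonal element $\operatorname{diag}(M, M^{-T})$ with $M$ in the Zariski closure of $\langle \mathcal{L}^{-1} \mathcal{L} \rangle$. Second, I will invoke assumption~(B) to produce some $L^* \in \mathcal{L} \cap \overline{\langle \mathcal{L} \mathcal{L}^{-1} \rangle}^{\mathrm{Zar}}$ and transfer this to the group I actually need: writing any $L \in \mathcal{L}$ as $(L L^{*-1}) L^{*}$ first gives $\mathcal{L} \subseteq \overline{\langle \mathcal{L} \mathcal{L}^{-1} \rangle}^{\mathrm{Zar}}$, hence $\overline{\langle \mathcal{L} \mathcal{L}^{-1} \rangle}^{\mathrm{Zar}} = \overline{\langle \mathcal{L} \rangle}^{\mathrm{Zar}}$; then the identity $L_0^{-1}(L_a L_b^{-1}) L_0 = (L_0^{-1} L_a)(L_0^{-1} L_b)^{-1}$, valid for any fixed $L_0 \in \mathcal{L}$, shows $\langle \mathcal{L}^{-1} \mathcal{L} \rangle = L_0^{-1} \langle \mathcal{L} \mathcal{L}^{-1} \rangle L_0$. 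Since $L_0 \in \overline{\langle \mathcal{L} \rangle}^{\mathrm{Zar}}$ and the latter is a group, conjugation by $L_0$ preserves it, yielding
\[ \overline{\langle \mathcal{L}^{-1} \mathcal{L} \rangle}^{\mathrm{Zar}} = \overline{\langle \mathcal{L} \rangle}^{\mathrm{Zar}} \ni L^{*},\]
and consequently $\operatorname{diag}(L^*, L^{*-T}) \in H$.

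Combining these two observations,
\[ \operatorname{diag}(L^{*}, L^{*-T}) \cdot Q(L^{*}, E\mathbbm 1 - V) = Q(\mathbbm 1, E\mathbbm 1 - V) \in H\]
for every $V \in \mathcal{V}$, placing the entire Goldsheid family inside $H$. Assumption~(C) ensures the hypotheses of Goldsheid's theorem \cite{G95}, which then yields that this family alone generates a Zariski dense subgroup of $\Sp(2W, \mathbb{R})$, so $H = \Sp(2W, \mathbb{R})$, as needed. The main obstacle I anticipate is precisely the algebraic transfer in the second paragraph: converting the information supplied by (B), which is phrased in terms of $\mathcal{L} \mathcal{L}^{-1}$, into the form naturally produced by the $Q$-block identity (an element of $\mathcal{L}$ lying in the Zariski closure of $\langle \mathcal{L}^{-1} \mathcal{L} \rangle$). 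Everything else is a routine chaining together of known results.
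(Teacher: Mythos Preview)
Your proposal is correct and follows the same route as the paper: factor $Q(L,E\mathbbm 1-V)=\operatorname{diag}(L^{-1},L^\intercal)\,Q(\mathbbm 1,E\mathbbm 1-V)$, use products of two $Q$'s with the same $V$ to produce block-diagonal elements, then invoke (B) to place $Q(\mathbbm 1,E\mathbbm 1-V)$ in the Zariski closure and finish with \cite{G95} and \cite{GM}. The only difference is bookkeeping: you compute $Q(L_1,\cdot)Q(L_2,\cdot)^{-1}$, which is genuinely block-diagonal with top-left block in $\mathcal L^{-1}\mathcal L$, and then supply the short group-theoretic bridge $\overline{\langle\mathcal L^{-1}\mathcal L\rangle}^{\mathrm{Zar}}=\overline{\langle\mathcal L\rangle}^{\mathrm{Zar}}$ to connect with assumption (B) (phrased via $\mathcal L\mathcal L^{-1}$); the paper instead writes down $Q(\widehat L,\cdot)^{-1}Q(L,\cdot)$ and asserts the block-diagonal form directly, glossing over this point.
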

\begin{proof}
Observe that 
\[ Q(L, E\mathbbm 1-V) = \left( \begin{array}{cc} L^{-1} & 0 \\ 0 & L^\intercal \end{array} \right) Q(\mathbbm 1,  E\mathbbm 1-V)~,\]
whence
\[ Q(\widehat L, E\mathbbm 1-V)^{-1} Q(L, E\mathbbm 1-V) = \left( \begin{array}{cc} \widehat L L ^{-1} & 0 \\ 0 & \widehat L^{-\intercal}L^\intercal \end{array} \right)~.\]
If the Zariski closure of the group generated by $\mathcal L \mathcal L^{-1}$ intersects $\mathcal L$, then  the Zariski closure of the group generated by $\{ Q(L, E \mathbbm 1 - V)\}_{L \in \mathcal L, V \in \mathcal V}$ contains that of the group generated by $\{ Q(\mathbbm 1, E \mathbbm 1 - V)\}_{V \in \mathcal V}$. 
\end{proof}
Having the corollary at hand, we deduce from \cite[Proposition 2.7]{KLS} applied to the matrices $\widetilde \Phi_N(E)$:
\begin{prop}\label{prop:LDP+} Assume (A)--(C). For any $\epsilon > 0$ there exist $C, c> 0$ such that for any $E \in I$ and $1 \leqslant j \leqslant W$
\begin{equation}\label{eq:LDP-norm}
\mathbb P \left\{ \left| \frac1N \log s_j(\widetilde \Phi_N(E)) - \gamma_j(E) \right| \geqslant \epsilon \right\} \leqslant C e^{-cN}~.
\end{equation}
Further, for any Lagrangian subspace $F \subset \mathbb R^{2W}$
\begin{equation}\label{eq:LDP-vec}
\mathbb P \left\{ \left| \frac1N \log s_j(\widetilde \Phi_N(E) \pi_F^*) - \gamma_j(E) \right| \geqslant \epsilon \right\} \leqslant C e^{-cN}~.
\end{equation}
\end{prop}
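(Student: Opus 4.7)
My plan is to invoke \cite[Proposition~2.7]{KLS} for the symplectic cocycle $\widetilde T_x(E)$ constructed in Section~\ref{s:sympl}. That proposition requires i.i.d.\ symplectic matrices satisfying a log-integrability condition, together with Zariski density of the group generated by their support in $\Sp(2W, \mathbb R)$. Each ingredient is at hand: $\widetilde T_x(E) = Q(L_x, E\mathbbm 1 - V_x)$ is i.i.d.\ in $\Sp(2W, \mathbb R)$; the bound $\mathbb E(\|\widetilde T_0(E)\|^\eta + \|\widetilde T_0(E)^{-1}\|^\eta) < \infty$, locally uniformly in $E$, follows from (A) and the explicit formula for $Q$; and the Corollary proved just above supplies Zariski density, hence simplicity of the Lyapunov spectrum, for every $E \in \mathbb R$. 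Plugging these in, \cite[Proposition~2.7]{KLS} yields (\ref{eq:LDP-norm}) and (\ref{eq:LDP-vec}) pointwise in $E \in I$ and Lagrangian $F \subset \mathbb R^{2W}$.

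To promote these pointwise bounds to uniform estimates in $E \in I$ and $F$, I would discretise both parameters by finite nets. On the event $\mathcal A_N = \{\max_{|x| \leqslant N}(\|L_x\| + \|L_x^{-1}\|) \leqslant e^{\tau N}\}$, whose complement has probability $\leqslant C e^{-cN}$ by (A) and Chebyshev's inequality for any $\tau > 0$, the map $(E, F) \mapsto \frac1N \log s_j(\widetilde \Phi_N(E) \pi_F^*)$ is Lipschitz with constant at most $e^{K \tau N}$ for some $K = K(W)$, since $E \mapsto \widetilde T_x(E)$ is affine with derivative controlled by $\|L_x^{-1}\|$ and $\log s_j$ is Lipschitz in the matrix once the singular values are bounded from below (which on $\mathcal A_N$ follows from $\|\widetilde \Phi_N(E)^{-1}\| \leqslant e^{O(\tau N)}$). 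Choosing $\tau$ small enough relative to $\epsilon$ and discretising at scale $\delta$ exponentially small in $N$, a union bound over the resulting $e^{O(\tau N)}$ grid points, combined with the pointwise LDP applied with a slightly smaller $\epsilon$ and with the continuity of $\gamma_j$ in $E$, yields the claimed uniform estimates.

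The only substantive bookkeeping is to verify that \cite[Proposition~2.7]{KLS} is stated with enough generality to cover all $1 \leqslant j \leqslant W$ and arbitrary Lagrangian $F$. If only the top singular value or the full matrix norm is treated there, the extension is standard: apply the same LDP to the induced symplectic cocycle on $\bigwedge^j \mathbb R^{2W}$, where Zariski density and simplicity of the Lyapunov spectrum both persist by standard arguments, and recover the Lagrangian restriction from the identity $\|\bigwedge^j A\| = s_1(A) \cdots s_j(A)$ applied with $A = \widetilde \Phi_N(E) \pi_F^*$. This step is routine but is the only piece of real work beyond a direct citation of \cite{KLS}.
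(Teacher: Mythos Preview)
Your core strategy --- invoke \cite[Proposition~2.7]{KLS} for the symplectic cocycle $\widetilde T_x(E)$, after checking the i.i.d., moment, and Zariski-density hypotheses --- is exactly what the paper does. The difference lies in how uniformity is handled, and here your proposal contains a genuine gap.

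The claim that on $\mathcal A_N$ one has $\|\widetilde\Phi_N(E)^{-1}\|\leqslant e^{O(\tau N)}$ is false. Even bounding $\|L_x\|,\|L_x^{-1}\|,\|V_x\|$ by a fixed constant (let alone $e^{\tau N}$), each factor $\|\widetilde T_x(E)^{\pm1}\|$ is bounded only by some $C>1$ that does \emph{not} tend to $1$ as $\tau\to0$; the product over $N$ factors gives $\|\widetilde\Phi_N^{\pm1}\|\leqslant C^N$, and with your event the bound is $e^{O(\tau N^2)}$. Consequently the Lipschitz constant of $(E,F)\mapsto\frac1N\log s_j(\widetilde\Phi_N(E)\pi_F^*)$ is at best $e^{O(N)}$ with a rate you cannot make small by shrinking $\tau$. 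Your net must then be exponentially fine, its cardinality exponentially large, and the union bound against the pointwise rate $e^{-cN}$ does not close without further information relating those exponents. (You also omitted $\|V_x\|$ from $\mathcal A_N$, but that is a minor oversight.)

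The paper avoids this entirely. It observes that \cite[Proposition~2.7]{KLS} already yields (\ref{eq:LDP-vec}) with constants uniform in $E\in I$ and in the Lagrangian $F$ --- this is the content of ``restatement'' --- so no discretisation is needed for (\ref{eq:LDP-vec}). Then (\ref{eq:LDP-norm}) is deduced from (\ref{eq:LDP-vec}) via a \emph{fixed}, $N$-independent $\delta$-net on the Lagrangian Grassmannian: each unit $j$-frame sits inside some Lagrangian, and the standard estimate $\|A\|\leqslant(1-\delta)^{-1}\max_i\|Av_i\|$ (and its exterior-power analogues) lets one pass from the net to the full matrix without ever invoking a Lipschitz bound on $\widetilde\Phi_N$. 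The net having bounded cardinality, the union bound is trivial. Your exterior-power remark is in the right spirit, but the essential point you are missing is that the net can be taken independent of $N$.
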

\begin{proof} The estimate (\ref{eq:LDP-vec}) is a restatement of \cite[Proposition 2.7]{KLS}, whereas (\ref{eq:LDP-norm}) follows from (\ref{eq:LDP-vec}) applied to a $\delta$-net on the manifold of Lagrangian subspaces of $\mathbb R^{2W}$ (the Lagrangian Grassmannian). We note that (\ref{eq:LDP-norm}) is also proved directly in \cite{DK}.
\end{proof}
Note that Proposition~\ref{prop:LDP} follows from (\ref{eq:LDP-vec}). 

\medskip 
Now fix $\epsilon$ and a Lagrangian subspace $F$, and let 
\begin{equation}\label{eq:defomega} \Omega_\epsilon^F[\widetilde \Phi_N]  = \left\{ \max_{j=1}^W   \left[ | \frac1N \log s_j(\widetilde \Phi_N) - \gamma_j  |  +  | \frac1N \log s_j(\widetilde \Phi_N  \pi_F^*) - \gamma_j | \right] \leqslant \frac{\epsilon}{100 W} \right\}~.\end{equation}
According to Proposition~\ref{prop:LDP+}, 
\[ \mathbb P(\Omega_\epsilon^{F}[\widetilde \Phi_N(E)]) \geqslant 1 - C(\epsilon, E) e^{-c(\epsilon, E) N}~, \]
where the constants are locally uniform in $E$.
Let 
\[ \widetilde \Phi_N(E) = U_N(E) \Sigma_N(E) V_N(E)^\intercal \]
be the singular value decomposition of $\widetilde \Phi_N(E)$. Assume that the singular values on the diagonal of $\Sigma_N(E)$ are arranged in non-increasing order; the choice of the additional degrees of freedom is not essential for the current discussion. Denote
\begin{equation}\label{eq:Fpm} F_+ = \left\{ \binom{x}{0} \, :\, x \in \mathbb R^{W} \right\} \subset \mathbb R^{2W}~, \quad 
F_- = \left\{ \binom0y \, :\, y \in \mathbb R^{W} \right\} \subset \mathbb R^{2W}~. \end{equation}
\begin{cl}\label{cl:unit} Let $F \subset \mathbb R^{2W}$ be a Lagrangian subspace. For $N$ large enough (depending on $\epsilon$), one has (deterministically) on the event $\Omega_\epsilon^F[\widetilde \Phi_N(E)]$ defined in (\ref{eq:defomega})
\[ s_W(\pi_{F_+ }V_N(E)^\intercal \pi_F^*) \geqslant e^{-\frac{\epsilon}{25} N}~.  \]
\end{cl}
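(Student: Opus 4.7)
The plan is to convert the event $\Omega_\epsilon^F$ into a determinantal lower bound on the top $W\times W$ block of $V_N^\intercal\pi_F^*$, via the Cauchy--Binet formula. Set $P := V_N(E)^\intercal\pi_F^*\in\mathbb R^{2W\times W}$. Since $V_N$ is orthogonal and $\pi_F^*$ is an isometric embedding, the columns of $P$ are orthonormal; writing $P = \binom{P^+}{P^-}$ with $P^\pm \in \mathbb R^{W\times W}$, the target is $s_W(P^+) = s_W(\pi_{F_+}V_N^\intercal\pi_F^*)$. Because $U_N$ is orthogonal,
\[
\prod_{j=1}^W s_j^2\bigl(\widetilde\Phi_N(E)\pi_F^*\bigr) \;=\; \det\bigl((\widetilde\Phi_N\pi_F^*)^\intercal\widetilde\Phi_N\pi_F^*\bigr) \;=\; \det(P^\intercal\Sigma_N^2 P),
\]
and with $\delta := \epsilon/(100W)$, on $\Omega_\epsilon^F$ both $\det(P^\intercal\Sigma_N^2 P)$ and $\prod_{i=1}^W\sigma_i^2$ (where $\sigma_i := s_i(\widetilde\Phi_N(E))$) lie in $[e^{2N\sum_j\gamma_j - N\epsilon/50},\, e^{2N\sum_j\gamma_j + N\epsilon/50}]$.

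Next I would apply the Cauchy--Binet formula
\[
\det(P^\intercal\Sigma_N^2 P) \;=\; \sum_{|S|=W}\det^2(P_S)\,\prod_{i\in S}\sigma_i^2,
\]
where $P_S$ is the $W\times W$ submatrix of $P$ with rows in $S$. The distinguished term $S = \{1,\dots,W\}$ equals $\det^2(P^+)\prod_{i=1}^W\sigma_i^2$, and the identity $\sum_S\det^2(P_S) = \det(P^\intercal P) = 1$ uniformly bounds $\det^2(P_S)\leq 1$. Here the symplectic structure of $\widetilde\Phi_N$ becomes essential: it forces the reciprocal pairing $\sigma_j\sigma_{2W+1-j} = 1$, so on $\Omega_\epsilon^F$ the cheapest ``swap'' of a positive Lyapunov direction for a negative one costs a multiplicative factor $\sigma_{W+1}^2/\sigma_W^2 \leq e^{-4N\gamma_W + O(N\epsilon)}$. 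Hence every non-principal $S$ contributes at most $e^{2N\sum_j\gamma_j - 4N\gamma_W + O(N\epsilon)}$, and summing over the $\binom{2W}{W}-1$ such subsets gives a total that, for $\epsilon$ small compared to $\gamma_W$ and $N$ large, is at most half of $e^{2N\sum_j\gamma_j - N\epsilon/50}$.

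Combining the two bounds yields $\det^2(P^+)\prod_{i=1}^W\sigma_i^2 \geq \tfrac12 e^{2N\sum_j\gamma_j - N\epsilon/50}$, and dividing by the upper bound on $\prod\sigma_i^2$ gives $|\det P^+| \geq \tfrac{1}{\sqrt 2}\,e^{-N\epsilon/50}$. To pass from $|\det P^+|$ to $s_W(P^+)$, note that $\|P^+\|\leq\|P\|\leq 1$, so $s_j(P^+)\leq 1$ for all $j$; therefore
\[
s_W(P^+) \;\geq\; |\det P^+|\,\Big/\prod_{j<W}s_j(P^+) \;\geq\; |\det P^+| \;\geq\; e^{-N\epsilon/25}
\]
for $N$ large enough (explicitly, $N \geq 50\log\sqrt 2\,/\,\epsilon$).

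The main obstacle is the quantitative accounting in Cauchy--Binet: one must verify that the symplectic reciprocal pairing really does force every non-principal subset $S$ to lose at least a factor $e^{-4N\gamma_W + O(N\epsilon)}$, and that this beats the slack $e^{N\epsilon/50}$ inherent to the event $\Omega_\epsilon^F$. Both estimates hold comfortably once $\epsilon$ is small relative to $\gamma_W$ (which is positive by the simplicity of the Lyapunov spectrum established in Section~\ref{s:simpl}); thereafter the argument reduces to bookkeeping in linear algebra and does not require any further probabilistic input.
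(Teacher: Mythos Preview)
Your argument is correct and reaches the same endpoint as the paper's proof---the determinantal lower bound $|\det(\pi_{F_+}V_N^\intercal\pi_F^*)|\geqslant e^{-\epsilon N/50+o(N)}$, followed by the observation that each $s_j(P^+)\leqslant 1$---but the route is genuinely different. The paper first truncates $\Sigma_N$ to its top $W\times W$ block $\widehat\Sigma^+$ (the error is $\sigma_{W+1}=1/\sigma_W\leqslant e^{-cN}$, which is where the symplectic reciprocal pairing enters), observes that $s_j(\widetilde\Phi_N\pi_F^*)\approx s_j(\widehat\Sigma^+ P^+)$, and then applies Horn's inequalities to the product $\widehat\Sigma^+ P^+$ to extract the lower bound on $\prod_k s_k(P^+)$. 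You instead keep the full $\Sigma_N$ and expand $\det(P^\intercal\Sigma_N^2 P)$ by Cauchy--Binet, using the reciprocal pairing to suppress all non-principal minors; this avoids the perturbation step and Horn's inequalities at the cost of the combinatorial bookkeeping over subsets $S$. Both approaches require $\epsilon$ to be small relative to $\gamma_W$ (for you, so that the $e^{-4\gamma_W N}$ gap beats the $e^{\epsilon N/25}$ slack; for the paper, so that the truncation error $e^{-(\gamma_W-\delta)N}$ is actually small)---you are explicit about this, the paper less so. Your approach is arguably more elementary, while the paper's is slightly more streamlined and in fact yields the stronger intermediate conclusion $\prod_{k\leqslant j} s_k(P^+)\geqslant e^{-\epsilon N/25}$ for every $j\leqslant W$.
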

\begin{rmk} For future reference, we also record the dual version of the claim: on $\Omega_\epsilon^F[\widetilde \Phi_N(E)^\intercal]$
\[ s_W(\pi_{F}^*U_N(E) \pi_{F_+}) \geqslant e^{-\frac{\epsilon}{25} N}~.  \]
\end{rmk}
\begin{proof} We abbreviate $\Sigma = \Sigma_N(E)$, $V = V_N(E)$, and $\gamma_j = \gamma_j(E)$. On the other hand, the constants with $\epsilon$ not explicitly present in the notation will be uniform in $\epsilon \to + 0$.

Clearly, $s_j(\pi_{F_+ }V^\intercal \pi_F^*)  \leqslant \| \pi_{F_+ }V^\intercal \pi_F^* \| \leqslant 1$. Hence it will suffice to show that on $\Omega_\epsilon^F[\widetilde \Phi_N(E)]$
\begin{equation}\label{eq:lwbdprod}
\prod_{k=1}^W s_k(\pi_{F_+ }V^\intercal \pi_F^*) \geqslant e^{-\epsilon N}~.
\end{equation}

Let $\Sigma^+$ be the diagonal matrix obtained by setting the $(k, k)$ matrix entries of $\Sigma$ to zero for $k > W$. Then on $\Omega_\epsilon^F[\widetilde \Phi_N(E)]$ we have 
\[ \| \Sigma -  \Sigma^{+} \|  \leqslant \exp(  - c N)  \]
(with $c>0$ uniform in $\epsilon \to +0$).
Thus $s_j(\widetilde \Phi_N \pi_F^*) = s_j(\Sigma V^\intercal \pi_F^*)$ satisfies
\[ | s_j(\widetilde \Phi_N  \pi_F^*) - s_j(\Sigma^{+} V^\intercal \pi_F^*)| \leqslant e^{-cN}~.\]
Observing that $s_j(\Sigma^{+} V^\intercal \pi_F^*) = s_j(\widehat\Sigma^{+}\pi_{F_+}  V^\intercal \pi_F^*)$, where $\widehat \Sigma^+ = \pi_{F_+} \Sigma^+ \pi_{F_+}^*$, and that 
\[  s_j(\widetilde \Phi_N \pi_F^*) \geqslant e^{(\gamma_j - \frac{\epsilon}{100W})N}\]
on $\Omega_\epsilon^F$~, we get (for sufficiently large $N$):
\[  s_j(\widehat \Sigma^+ \pi_{F_+} V^\intercal \pi_F^*) \geqslant e^{(\gamma_j - \frac{\epsilon}{50W})N}~,  
\quad
\prod_{k=1}^j s_k(\widehat \Sigma^+ \pi_{F_+}  V^\intercal \pi_F^*) \geqslant e^{(\gamma_1 + \cdots + \gamma_j -\frac{\epsilon}{50})N}~.\]
On the other hand, using the submultiplicativity of the operator norm and the equalitiy between the norm of the $j$-th wedge power of a matrix and the product of its $j$ top singular values, we have
\[\begin{split}  \prod_{k=1}^j s_k(\widehat \Sigma^+ \pi_{F_+}  V^\intercal \pi_F^*) &\leqslant   \prod_{k=1}^j s_k (\widehat \Sigma^+)  \times \prod_{k=1}^j s_k(\pi_{F_+}  V^\intercal \pi_F^*)\\
&\leqslant e^{(\gamma_1 + \cdots + \gamma_j + \frac{\epsilon}{100})N}  \prod_{k=1}^j s_k(\pi_{F_+}  V^\intercal \pi_F^*)~, 
\end{split}\]
whence 
\[ \prod_{k=1}^j s_k(\pi_{F_+}  V^\intercal \pi_F^*) \geqslant e^{-\frac{\epsilon}{25}N}~, \quad 1 \leqslant j \leqslant W~, \]
thus concluding the proof of (\ref{eq:lwbdprod}) and of the claim.
\end{proof}

\subsection{Wegner-type estimate: proof of Proposition~\ref{prop:Weg}}\label{s:Weg}

Let us first show that for any $i \in [-N, N]$
\begin{equation}\label{eq:diag-G}
\mathbb P \left\{ \|G_E[H_{-N,N]}](i,i) \| \geqslant e^{\epsilon N} \right\} \leqslant C_\epsilon e^{-c_\epsilon N }~. \end{equation}
By Claim~\ref{cl:formG}, 
\[ G_E[H_{[-N, N]}](i, i) = \left( \Psi_{i+1}^+ (\Psi_i^+)^{-1} - \Psi_{i+1}^- (\Psi_i^-)^{-1}\right)^{-1} L_i^{-1}~,\]
where
\[\begin{split} \binom{\Psi_{i+1}^+}{\Psi_i^+} &= \Phi_{i+1,N+1} \binom{0}{\mathbbm 1} = \left( \begin{array}{cc} \mathbbm 1 & 0 \\ 0 & L_i^{- \intercal} \end{array} \right)  \widetilde\Phi_{i+1,N+1}  \binom{0}{ L_N^\intercal}~, 
\\
\binom{\Psi_{i+1}^-}{\Psi_i^-} &= \Phi_{i+1,-N} \binom{\mathbbm 1}{0} = \left( \begin{array}{cc} \mathbbm 1 & 0 \\ 0 & L_i^{- \intercal} \end{array} \right)  \widetilde\Phi_{i+1,-N}  \binom{\mathbbm 1}0~.
\end{split}\]
Hence 
\[  G_E[H_{[-N, N]}](i, i)  = L_i^{-\intercal} \left( X^+ - X^- \right)^{-1} L_i^{-1}~, \]
where
\[ X^+ = (\widetilde\Phi_{i+1,N+1} )_{12}   (\widetilde\Phi_{i+1,N+1})_{22}^{-1}~,  \quad
X^- =  (\widetilde \Phi_{i+1,-N} )_{11}  (\widetilde \Phi_{i+1,-N})_{21}^{-1}~,\] 
and the subscripts $11$ and $21$ represent extracting the corresponding $W\times W$ blocks from a $2W \times 2W$ matrix (i.e.\ $Y_{11} = \pi_{F_+} Y \pi_{F_+}^*$, $Y_{21} = \pi_{F_-} Y \pi_{F_+}^*$, in the notation of (\ref{eq:Fpm})).
Both matrices $X^\pm$ are Hermitian,  as  follows from the symplectic property of the transfer matrices.

Without loss of generality we can assume that $i \geqslant 0$. We shall prove that 
\[ \mathbb P \left\{ s_W(X^+ - X^-) \leqslant e^{-\epsilon N} \,| \, X^+ \right\} \leqslant C_\epsilon e^{-c_\epsilon N }~. \]
To this end, denote
\[ F = \left\{ \binom{x}{y} \in \mathbb R^{2W} \, : \, y = - X^+ x \right\}~. \]
In the notation of Claim~\ref{cl:unit}, consider the transfer matrix $\widetilde\Phi_{i+1,-N}$, and let  
\[ \Omega_\epsilon = \Omega_\epsilon^F[\widetilde \Phi^*] \cap \Omega_\epsilon^{F_+} [\widetilde \Phi^*]  \cap \Omega_\epsilon^{F_-} [\widetilde \Phi^*] \cap \Omega_\epsilon^{F_+} [\widetilde \Phi ]   \]
 (note that $\widetilde \Phi_{i+1,-N}$ is independent of $X^+$ and thus also of $F$).   It suffices to show that on $\Omega_\epsilon$
\begin{equation}\label{eq:sWdiffX} s_W(X^+ - X^-) \geqslant e^{-\frac\epsilon2 N}~. \end{equation}
Let us write the singular value decomposition of $\widetilde \Phi = \widetilde\Phi_{i+1,-N}$ in block form:
\[ 
\left(\begin{array}{cc}
\widetilde \Phi_{11} & \widetilde \Phi_{12} \\ \widetilde \Phi_{21} & \widetilde \Phi_{22} \end{array}\right)
=
\left(\begin{array}{cc}
U_{11} &  U_{12} \\  U_{21} &  U_{22} \end{array}\right)
\left(\begin{array}{cc}
 \widehat \Sigma^+ &   \\   &  \widehat \Sigma^- \end{array}\right)
\left(\begin{array}{cc}
 V_{11}^\intercal &  V_{21}^\intercal \\  V_{12}^\intercal &  V_{22}^\intercal \end{array}\right)
\]
whence on $\Omega_\epsilon$
\[ \| \widetilde \Phi_{11} - U_{11} \widehat \Sigma^+ V_{11}^\intercal \|~, \| \widetilde \Phi_{21} - U_{21} \widehat \Sigma^+ V_{11}^\intercal \| \leqslant e^{-cN}~. \]
Further, by Claim~\ref{cl:unit} we have on $\Omega_\epsilon$:
\begin{equation}\label{eq:sWblocks} s_W(U_{11})~, s_W(U_{21})~, s_W(V_{22}) \geqslant e^{-\frac{\epsilon}{25} N}~. \end{equation}
Let us show that
\begin{equation}\label{eq:resolve}
\| X^- - U_{11} U_{21}^{-1} \| \leqslant e^{-c'N}~.
\end{equation}
To this end, start with the relation
\[ X^- = (U_{11} \widehat \Sigma^+ V_{11}^\intercal + E_1) (U_{21} \widehat \Sigma^+ V_{11}^\intercal+ E_2)^{-1}, \quad \| E_1\|, \|E_2\| \leqslant e^{-cN}~.\]
In view of the bound 
\[ s_W(U_{21} \widehat \Sigma^+ V_{11}^\intercal) \geqslant s_W(U_{21}) s_W(\widehat \Sigma^+) s_W(V_{11}^\intercal) \geqslant e^{+c_1 N}~,\]
we can set $E_2' = E_2 (U_{21} \widehat \Sigma^+ V_{11}^\intercal)^{-1}$ and rewrite
\[ (U_{21} \widehat \Sigma^+ V_{11}^\intercal+ E_2)^{-1} = (U_{21} \widehat \Sigma^+ V_{11}^\intercal)^{-1} (1 + E_2')~, \quad  \|E_2'\| \leqslant e^{-c_2N}~,\] 
which implies (\ref{eq:resolve}).

Now, the matrix $X^+$ is symmetric, therefore $x - X^+ y = 0$ for $\binom{x}{y} \in F^\perp$, whence for any  $\binom{x}{y} \in \mathbb R^{2W}$
\[ x - X^+ y =  (\mathbbm 1 \,\,\mid \,\, - X^+) \pi_F^* \pi_F  \binom{x}{y}\]
(where the first term is a $1 \times 2$ block matrix).
Therefore we have, by 
another application of Claim~\ref{cl:unit}:
\[\begin{split} s_W(U_{11} - X^+ U_{21})  &= s_W((\mathbbm 1 \,\,\mid \,\, - X^+) \pi_F^* \pi_F U \pi_{F_+}^*)  \\ &\geqslant s_W((\mathbbm 1 \,\,\mid \,\, - X^+) \pi_F^*) s_W(\pi_F U \pi_{F_+}^*) \geqslant  s_W(\pi_F U \pi_{F_+}^*) \geqslant e^{-\frac{\epsilon}{25}N}~. \end{split} \]
This, together with (\ref{eq:resolve}) and (\ref{eq:sWblocks}), concludes the proof of (\ref{eq:sWdiffX}), and of (\ref{eq:diag-G}). 

Now we consider the elements $G_E[H_{[-N,N]}](i, i\pm 1)$. We have: 
\[G_E[H_{[-N,N]}](i, i\pm 1) =\Psi_{i+1}^\pm (\Psi_i^\pm)^{-1}  G_E[H_{[-N,N]}](i, i)~.\]
The norm of $ G_E[H_{[-N,N]}](i, i)$ is controlled by (\ref{eq:diag-G}), whereas $\Psi_{i+1}^\pm (\Psi_i^\pm)^{-1} = L_i^{-1} X^\pm$ are controled using (\ref{eq:resolve}) and Claim~\ref{cl:unit}.
\qed

\section{On generalisations}

\paragraph{Other distributions} The assumptions (A)--(C) in Theorems~\ref{thm:1} and \ref{thm:2} can probably be relaxed. Instead of a finite fractional moment in (A), it should be sufficient to assume the existence of a sufficiently high logarithmic moment:
\[ \mathbb{E} (\log_+^A  \|V_0 \|  + \log_+^A  \| L_0\|  +  \log_+^A \|L_0^{-1}\| ) < \infty\] 
for a sufficiently large $A > 1$. To carry out the proof under this assumption in place of (A), one would need appropriate versions of large deviation estimates 
for random matrix products. 

As we saw in the previous section, the  r\^ole of the assumptions (B)--(C) is to ensure that the conditions of the Goldsheid--Margulis theorem \cite{GM} are satisfied. That is, our argument yields the following:
\begin{thm}\label{thm:3} Let $I \subset \mathbb R$ be a compact interval. Assume (A) and that for any $E \in I$ the group generated by 
\[ \left\{ Q(L, E\mathbbm 1 - V)\right\}_{L \in \mathcal L, \, V \in \mathcal V}\]
is Zariski-dense in $\Sp(2W, \mathbb R)$. Then:
\begin{enumerate}
\item The spectrum of $H$ in $I$ is almost surely pure point, and 
\begin{equation}\label{eq:thm'} \mathbb P \left\{ \forall (E, \psi) \in \mathcal E[H] \,\,\, E \in I \Longrightarrow \limsup_{x \to \pm\infty} \frac{1}{|x|} \log \|\psi(x)\| \leqslant - \gamma_W(E)\right\}   =1~;\end{equation}
\item for any compact subinterval $I' \subset I$ (possibly equal to $I$) one has:
\begin{equation}\label{eq:thm2-bis}\mathbb P\left\{  \limsup_{x \to \pm \infty} \frac{1}{|x|} \log Q_I(x, y) \leqslant - \inf_{E \in I} \gamma_W(E) \right\} = 1~.\end{equation}
\end{enumerate}
\end{thm}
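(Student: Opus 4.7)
\medskip

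\noindent\textbf{Proof proposal.} The plan is to observe that the arguments of Sections~\ref{s:ingr}, \ref{s:redtm}, \ref{s:pfmainprop}, \ref{s:pf1}, \ref{s:pf2} and \ref{s:tm} are essentially local in the spectral parameter $E$, and that assumptions (B)--(C) enter into the proofs of Theorems~\ref{thm:1} and \ref{thm:2} only to verify the hypothesis of the Goldsheid--Margulis theorem via Goldsheid's result \cite{G95}. Under the hypothesis of Theorem~\ref{thm:3}, this Zariski-density is postulated directly for every $E \in I$, so the Goldsheid--Margulis theorem yields
\[ \gamma_1(E) > \cdots > \gamma_W(E) > 0 \quad \text{for all } E \in I, \]
and in particular $\gamma_W$ is a positive continuous function on $I$.

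With simplicity of the Lyapunov spectrum established on $I$, Proposition~\ref{prop:LDP+} applies verbatim on any compact subinterval, since its proof invokes only \cite[Proposition~2.7]{KLS}, which requires exactly the Zariski-density assumption that is now our standing hypothesis. Similarly, the Wegner-type estimate (Proposition~\ref{prop:Weg}) goes through without change: its proof rests on Claim~\ref{cl:unit} and on manipulations of the transfer matrices, both of which use assumptions (B)--(C) only through Proposition~\ref{prop:LDP+}. Consequently Proposition~\ref{prop:main} holds for any compact subinterval $I' \subset I$, with constants depending on $I'$ but uniform in $E \in I'$ (by the continuity of $\gamma_W$ together with the locally uniform estimates already present in the proofs).

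Given these ingredients, the proof of Theorem~\ref{thm:1} in Section~\ref{s:pf1} transfers, provided one restricts Schnol's argument to generalised eigenvalues inside $I$. Cover $I$ by a countable family of compact subintervals $\{I_k\}$ with $\max_{E \in I_k} \gamma_W(E) - \min_{E \in I_k} \gamma_W(E) \leqslant \tau/2$, apply Proposition~\ref{prop:main} to each $I_k$ (with parameters $\tau_n \to 0$), and take a countable intersection of full-probability events. On this intersection, every generalised eigenpair $(E, \psi)$ with $E \in I$ satisfies $\limsup_{|x| \to \infty} |x|^{-1}\log\|\psi(x)\| \leqslant -\gamma_W(E)$ exactly as in Section~\ref{s:pf1}. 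Schnol's lemma then implies that $\sigma(H) \cap I$ is almost surely pure point and yields \eqref{eq:thm'}.

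For part~2, the argument of Section~\ref{s:pf2} is already formulated for an arbitrary compact interval: it uses nothing beyond Proposition~\ref{prop:main} together with the integration bound \eqref{eq:Qint} from \cite[Lemma~4.1]{ESS}. Applied to a compact subinterval $I' \subset I$, it yields \eqref{eq:thm2-bis}. The main point requiring a moment's verification -- and the only step where one must be somewhat careful -- is to confirm that in the chain of deductions (Proposition~\ref{prop:LDP+} $\Rightarrow$ Proposition~\ref{prop:Weg} $\Rightarrow$ Proposition~\ref{prop:main}) the constants indeed depend only on $I'$ and not on individual $E$; this is a compactness argument using the locally uniform exponential decay in \eqref{eq:LDP-norm}--\eqref{eq:LDP-vec}.
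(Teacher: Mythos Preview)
Your proposal is correct and takes essentially the same approach as the paper, which simply observes (without a formal proof) that assumptions (B)--(C) enter the arguments of Sections~2 and~3 only through the verification of the Zariski-density hypothesis of the Goldsheid--Margulis theorem, so that postulating this hypothesis directly on $I$ yields Theorem~\ref{thm:3} by the same chain of deductions. Your write-up spells out the logical dependencies (Proposition~\ref{prop:LDP+} $\Rightarrow$ Proposition~\ref{prop:Weg} $\Rightarrow$ Proposition~\ref{prop:main} $\Rightarrow$ Theorems~\ref{thm:1}, \ref{thm:2}) more explicitly than the paper does, but the substance is identical.
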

As we saw in the previous section, the second condition of this theorem is implied by our assumptions (B)--(C). 
Most probably, weaker assumptions should suffice, and, in fact, we believe that the conclusions of Theorems~\ref{thm:1} and \ref{thm:2} hold as stated without the assumption (B). A proof would require an appropriate generalisation of the results of Goldsheid \cite{G95}. 

Another interesting class of models appears when $V_x \equiv 0$. The complex counterpart of this class, along with a generalisation in which the distribution of $L_x$ depends on the parity of $x$, has recently been considered by Shapiro \cite{Sh}, in view of applications to topological insulators. An interesting feature of such models is that the slowest Lyapunov exponent $\gamma_W(E)$ may vanish at $E=0$. This circle of questions (in partiular, the positivity of the smallest Lyapunov exponent and Anderson localisation) is studied in \cite{Sh} under the assumption that the distribution of $L_0$ in $\GL(W, \mathbb C)$ is regular. In order to extend the results of \cite{Sh} (for matrices  complex entries) to singular distributions, one would first need an extension of \cite{GM} to the Hermitian symplectic group. 

Returning to the (real) setting of the current paper, assume that (B)--(C) are replaced with
\begin{enumerate}
\item[(B$'$)] the group generated by $\mathcal L$ is Zariski-dense in $\GL(W, \mathbb R)$;
\item[(C$'$)] $V_x \equiv 0$.
\end{enumerate}
Along the arguments of \cite{Sh}, one can check that the conditions of \cite{GM} hold for any $E \neq 0$. From Theorem~\ref{thm:3}, one deduces that the conclusion of Theorem~\ref{thm:1} holds under the assumptions (A), (B$'$), (C$'$), whereas the conclusion (\ref{eq:thm2-bis}) of Theorem~\ref{thm:2} holds for compact intervals $I$ not containing $0$. If $\gamma_W(0)= 0$, (\ref{eq:thm2-bis}) is vacuous for $I \ni 0$. If $\gamma_W(0) > 0$, (\ref{eq:thm2-bis}) is meaningful and probably true for such intervals, however, additional arguments are required to establish the  large deviation estimates required for the proof.

Finally, we note that Theorem~\ref{thm:3} remains valid if the independence assumption is relaxed as follows: $\{(V_x, L_x)\}_{x \in \mathbb Z}$ are jointly independent (i.e. we can allow dependence between $V_x$ and the corresponding $L_x$).

\paragraph{The half-line} Similar results can be established for random operators on the half-line. For simplicity, we focus on the case $L_x \equiv \mathbbm 1$. Fix   a Lagrangian subspace $F \subset \mathbb R^{2W}$, and consider the space $\mathcal H_F$ of square-summable sequences $\psi: \mathbb Z_+ \to \mathbb C^W$ such that $\binom{\psi(1)}{\psi(0)} \in F$. Define an operator $H_F$ acting on $\mathcal H_F$ so that
\[ (H_F\psi)(x) = L_x \psi(x+1) + V_x \psi(x) + L_{x-1}^\intercal \psi(x-1)~, \quad x \geq 1 \]
(see e.g.\ \cite{Atk} for details). 
\begin{thm}\label{thm:4}
Fix  a Lagrangian subspace $F \subset \mathbb R^{2W}$. Under the assumptions (A) and (C) with $L_x \equiv 1$,  the spectrum of $H_F$ in any compact interval $I$ is almost surely pure point, and 
\begin{equation}\label{eq:thm''} \mathbb P \left\{ \forall (E, \psi) \in \mathcal E[H_F] \,\,\, E \in I \Longrightarrow \limsup_{x \to  \infty} \frac{1}{|x|} \log \|\psi(x)\| \leqslant - \gamma_W(E)\right\}   =1~.\end{equation}
\end{thm}
\begin{rmk} \hfill
\begin{enumerate}
\item For general $L_x$, the boundary condition has  to be prescribed in a different way. However, in the Dirichlet case $F = F_+$ the result holds as stated for general $L_x$ satisfying (A)--(B).
\item Combining the proof of Theorem~\ref{thm:2} with the additional argument described below, one can also prove dynamical localisation.
\item For $W=1$, a result of Kotani \cite{Kot} implies that the operator  $H_F$ has pure point spectrum for almost every boundary condition $F$; a similar statement is valid for $W>1$. As to fixed (deterministic) boundary conditions, the only published reference known to us is the work of Gorodetski--Kleptsyn \cite{GorKl}, treating Schr\"odinger operators in $W=1$ with Dirichlet boundary conditions.
\item The event of full probability provided by Theorem~\ref{thm:4} depends on the boundary condition $F$. And indeed, a result of Gordon \cite{Gor} implies that (almost surely) there exists a residual set of initial conditions $F$ for which the spectrum of $H_F$ is not pure point (and in fact has only isolated eigenvalues).
\end{enumerate}
\end{rmk}

\begin{proof}[Sketch of proof of Theorem~\ref{thm:4}] We indicate the necessary modifications with respect to the proof of Theorem~\ref{thm:1}.   First, we modify the definition (\ref{eq:nonres}) of $\Res(\tau, E, N)$ as follows: $x \geqslant N+1$ is said to be $(\tau,E,N)$-non-resonant ($x \notin\Res(\tau, E,N)$)  under the same condition 
\begin{equation}\label{eq:nonres'} 
\|G_E[H_{[x-N, x+N]}](x, x\pm N)\| \leqslant e^{-(\gamma_W(E) - \tau)N}~,\end{equation}
 while $x \in \{1, \cdots, 2 N\}$ is said to be $(\tau,E,N)$-non-resonant if 
\begin{equation}\label{eq:nonres''}\det (\pi_F \Phi_N(E)^* \Phi_N(E) \pi_F^*) \geqslant e^{2 (\gamma_1(E) + \cdots + \gamma_{W}(E) - \tau)N}\end{equation}
(this condition does not depend on $x$, and only depends on the restriction of the operator to $[1, N]$).
We claim that Proposition~\ref{prop:main} is still valid: 
\begin{equation}\label{eq:prop-main-bc} \mathbb P \left\{ \max_{E \in I} \diam (\Res(\tau, E, N) \cap [1, N^2]) > 2N \right\} \leqslant C e^{-cN}~. \end{equation}
To prove this estimate, it suffices to show that for any $1 \leqslant x < y \leqslant N^2$ with $|y-x|>2N$ one has
\begin{equation}\label{eq:prop-main-bc;} \mathbb P \left\{  \exists {E \in I} : \,   x, y \in \Res(\tau, E, N)   \right\} \leqslant C e^{-cN}~. \end{equation}
The case $x, y > 2N$ is covered by the current Proposition~\ref{prop:main}. If $x \leqslant 2N$ and $y > 2N$ the events $x \in \Res(\tau, E, N)$ and $y \in \Res(\tau, E, N)$ are independent; the probability that $x \in \Res(\tau, E, N)$ is exponentially small due to the large deviation estimate  (\ref{eq:LDP}), and the collection of $E$ violating (\ref{eq:nonres''}) is the  union of $\leqslant N^{2W}$ intervals. From this point the proof of (\ref{eq:prop-main-bc;}) mimics the argument in the proof of  Proposition~\ref{prop:main}.

As in the proof of Theorem~\ref{thm:1}, let $\psi$ be a generalised solution at energy $E \in I$ given by Schnol's lemma, $x^{-1} \log \| \psi(x) \| \to 0$. Letting $u_x = \binom{\psi(x)}{\psi(x-1)}$, we have 
\[ \| \Phi_N (E) u_1 \| \leqslant  e^{\tau N }~,   \]
hence for sufficiently large $N$ one has 
\[ s_W(\Phi_N(E) \pi_F^*) \leqslant   e^{2 \tau N}~.\]
On the other hand, on an event of full probability one has for all $E \in I$ and all sufficiently large $N$
\[ (s_1 \cdots s_{W-1})(\Phi_N(E) \pi_F^*) \leqslant (s_1 \cdots s_{W-1})(\Phi_N(E)) \leqslant e^{(\gamma_1(E) + \cdots + \gamma_{W-1}(E) + \tau) N }   \]
due to a version of the Craig--Simon theorem \cite{CS} (cf.\ \cite[Lemma 2.2]{GS}). This implies 
\[ (s_1 \cdots s_{W})(\Phi_N(E) \pi_F^*) \geqslant e^{(\gamma_1(E) + \cdots + \gamma_{W-1}(E)  + 3\tau) N }~,   \]
which contradicts (\ref{eq:nonres''}) when $\tau > 0$ is  small enough. Thus for $N$ large enough 
\[ \Res(\tau, E,N) \cap [N+1, N^2] = \varnothing~, \]
and thus $\psi$ decays exponentially as in the proof of Theorem~\ref{thm:1}.
\end{proof}

\end{document}